\documentclass[11pt,a4paper,english]{article}

\title{Isospectral Deformations in QFT: The Massive case} 
\author{A. Much\\ \footnotesize{Instituto de Ciencias Nucleares, UNAM, M\'exico D.F. 04510, M\'exico}}

\usepackage[pdftex]{graphicx} 
\usepackage{float}            
\usepackage[]{babel}
\usepackage[T1]{fontenc}   
\usepackage[cmex10]{amsmath}  
\usepackage{amstext}          
\usepackage{mathrsfs}
\usepackage{amsfonts}         
\usepackage{amssymb}          
\usepackage{setspace}
\usepackage{amsthm}
\usepackage{amsmath}
\usepackage{bbold}
\usepackage{bm}               
\usepackage{enumerate}        
\usepackage[bottom]{footmisc} 
\usepackage{array}            
\usepackage{textcomp}         
\usepackage{pdfpages}         
\usepackage{parskip}          
\usepackage[right]{eurosym}   
\usepackage{xcolor}           
\usepackage[square,numbers]{natbib}	  

\DeclareMathOperator{\arcsinh}{arcsinh}

\definecolor{darkred}{rgb}{0.7,0.0,0.0}

\sloppy                       

%
%
\setlength{\unitlength}{1cm}
\setlength{\oddsidemargin}{0cm}
\setlength{\evensidemargin}{0cm}
\setlength{\textwidth}{15cm}
\setlength{\topmargin}{-1cm}
\setlength{\textheight}{23cm}
\columnsep 0.5cm
%

\newtheorem{theorem}{\textsc{Theorem}}[section]
\newtheorem{lemma}{\textsc{Lemma}}[section]
\newtheorem{proposition}{\textsc{Proposition}}[section]
 
\theoremstyle{definition} 
\newtheorem{definition}{\textsc{Definition}}[section]

\theoremstyle{remark}
\newtheorem{remark}{Remark}[section]
\newtheorem{example}{Example}[section]


\usepackage{fancyhdr}
\pagestyle{fancy}
\fancyhf{}
\fancyhead[L]{\rightmark}
\fancyhead[R]{\thepage}

 \numberwithin{equation}{section} 
\begin{document}
\maketitle
\abstract{We consider isospectral deformations of quantum field theories by using the
novel construction tool of warped convolutions. The deformation enables us to
obtain a variety of models that are wedge-local  and have nontrivial scattering matrices. } 
  \tableofcontents

\section{Introduction} 

 The equivalence of scalar fields living on a constant Moyal-Weyl space-time and wedge-local fields was proven in the ground breaking paper \cite{GL1}.  This important insight has since its publication generated a vivid interest in  algebraic and constructive QFT \cite{GL2, BS,  BLS,A,  GL4, Mor, MUc, BC, GL5}.\\\\In this context  an important tool has been formulated \cite{BS,BLS}, known as warped convolutions,  in order to deform quantum fields in a mathematical rigorous fashion. This mathematical precise formulation of deformation theory was used  to construct new QFT-models from a free theory at hand, \cite{A,  GL4, Mor, MUc, BC, GL5}. In particular the investigation shows that the newly obtained models have non-trivial scattering matrices, which even satisfy weakened  relativistic locality and covariance properties. The weakened properties are interesting from a physical point of view since relativistic symmetries are hard to realize when the notion of a point (non-commutative geometry) is lost.
 The technique of warped convolutions has been as well used in a quantum mechanical context in order to obtain   quantum mechanical effects and attack the quantum measurement problem, \cite{AA, Muc1}. \\\\
To obtain scalar fields living on a constant Moyal-Weyl space-time by using warped convolutions, one uses as generator of  deformation the  momentum operator. Hence, fields deformed with the \textbf{momentum operator} correspond to wedge local fields which are equivalent to fields living on a constant Moyal-Weyl space-time. 
Now the question which is the subject of investigation in this paper is the following: Are there more wedge-local fields corresponding to NC space-times that differ from the 
constant Moyal-Weyl space-time? An answer to this problem was given on the basis of a concrete example \cite{MUc}, where the special conformal operators were used instead of the momentum operators for deformation. The proofs in the case of the conformal transformations used the unitary equivalence to the momentum operator. Is this program extendable?  
Hence, can we take operators that are unitary equivalent to the momentum operator and obtain wedge-local fields on one hand and excitations living on a nontrivial non-commutative space-time on the other hand? A detailed answer will be given in this paper. We first define the operators on the appropriate domains and investigate how far the program of wedge-locality can be achieved.\\\\
The investigation shows that the Wightman properties of scalar fields deformed with the unitary equivalent operator are satisfied without any restriction. However, concerning the wedge-locality we obtain an additional requirement on the operators used for deformation.   
Moreover,  to  use the concept of tempered polarization free generators  \cite{Sch97,BBS} for scattering, we are obliged to show temperateness and polynomial boundedness of the field. The award of this concept is the ability   to calculate an explicit two-particle scattering of the deformed theory.  
Next to investigating locality, covariance and  scattering   we study the relation of deformed fields to a non-commutative space-time. The first step in this direction is to construct an isomorphism from the deformed $^*-$algebras to the unitary transformed $^*-$algebras of fields living on Moyal-Weyl. By using   simple examples it is further shown how the deformed  $^*-$algebras relate to the twist-deformation framework, i.e. to fields defined on a  non-commutative space-time. 
\\\\
The paper is organized as follows: in the second Section we lay out the novel tool of warped convolutions with all the definitions,  lemmas and  propositions needed for this work.  In Section $3$ we define the operators used for deformation and prove that the warped convolutions of the free scalar field, given as an oscillatory integral, is well-defined. Furthermore, the Wightman properties and wedge-locality are proven for a specific set of operators. The section ends with a whole class of examples. The fourth section describes the scattering of the constructed models. Last but not least we turn to the heart piece of this paper: The investigation of the resulting non-commutative space-times generated by isospectral deformations.

 \section{Warped convolutions in QFT}\label{s2}
In  this section we write all   basic definitions and lemmas of the deformation known under the name of warped convolutions. For  proofs of the respective lemmas we refer the reader to the original works \cite{BLS, GL1}. \newline\newline 
The authors start their investigation with a $C^{*}$-dynamical system $(\mathcal{A},\mathbb{R}^d)$, \cite{P97}.
It consists of a $C^{*}$-algebra $\mathcal{A}$  equipped with a strongly continuous automorphic action of the group $\mathbb{R}^d$ which will be denoted by $\alpha$.
Furthermore, let  $\mathcal{B}(\mathscr{H})$ be the Hilbert space of bounded
operators on $\mathscr{H}$ and let  the adjoint action $\alpha$ be implemented  by the weakly continuous unitary representation $U$. Then,  it is argued that  since the unitary representation $U$ can be extended to the algebra $\mathcal{B}(\mathscr{H})$,   there is no lost of generality   when one proceeds to  the $C^{*}$-dynamical system  $(C^{*}\subset  \mathcal{B}(\mathscr{H}),\mathbb{R}^d)$.  Here $C^{*}\subset  \mathcal{B}(\mathscr{H})$  is the $C^{*}$-algebra of all operators on which $\alpha$ acts strongly continuously.
\newline\newline 
Hence, we start by assuming the
existence of a strongly continuous unitary group $U$ that is a representation of the additive
group $\mathbb{R}^{d}$, $d\geq2$, on some separable Hilbert space $\mathscr{H}$. 
Moreover, to define the deformation of operators belonging to a $C^{*}$-algebra $C^{*}\subset  \mathcal{B}(\mathscr{H})$, we consider
elements belonging to the sub-algebra $C^{\infty}\subset C^{*} $. The sub-algebra $C^{\infty}$ is
defined to be  the $*-$algebra of smooth elements (in the norm topology) with respect to $\alpha$, which
is the adjoint action of a weakly continuous
unitary representation $U$ of $\mathbb{R}^{d}$ given by  
\begin{equation*}
 \alpha_{x}(A)=U(x)\,A\,U(x)^{-1}, \quad x \in \mathbb{R}^{d}.
\end{equation*} 
Let $\mathcal{D}$ be the dense domain of vectors in $\mathscr{H}$ which transform
smoothly under the adjoint action of $U$.
Then, the warped convolutions  for operators  $A\in C^{\infty}$ are given by
the following definition.\\
\begin{definition}
 Let $\theta$ be a real skew-symmetric matrix relative to the chosen bilinear form on $\mathbb{R}^d$, let $A\in C^{\infty}$ and let $E$ be
the spectral resolution of the unitary operator $U$.  Then, the corresponding warped convolution $A_{\theta}$
of $A$ is defined on the domain $\mathcal{D}$ according to
\begin{equation}\label{WC}
 A_{\theta}:=\int \alpha_{\theta x}(A)dE(x),
\end{equation}
where $\alpha$ denotes the adjoint action of $U$ given by $\alpha_{k}(A)=U(k)\,A\,U(k)^{-1}$.
\end{definition}
The restriction in the choice of operators is owed to the fact that the deformation is performed
with operator valued integrals. Furthermore,  one can represent the warped
convolution of $
 {A} \in  {C}^{\infty}$    by $\int \alpha_{\theta x}(A) dE(x)$ or $\int dE(x)
\alpha_{\theta x}(A) $, on the dense domain
$\mathcal{D}\subset\mathscr{H}$ of vectors smooth w.r.t. the action of $U$,  in terms of
strong limits 
\begin{equation*}
\int\alpha_{\theta x}(A) dE(x)\Phi=(2\pi)^{-d}
\lim_{\epsilon\rightarrow 0}
\iint  d x\, d y \,\chi(\epsilon x,\epsilon y )\,e^{-ixy}\,   \alpha_{\theta x}(A)\, U(y) \Phi,  
\end{equation*}
where $\chi \in\mathscr{S}(\mathbb{R}^d\times\mathbb{R}^d)$ with $\chi(0,0)=1$.
This representation makes calculations and proofs concerning the existence of  integrals
easier. In this work we  use both representations. \newline\newline 
The following lemma shows first that the two different warped
convolutions are equivalent. Second, it shows how the complex conjugation acts on
the warped convoluted operator.\newline
\begin{lemma}\label{wcl2}
 Let $\theta$ be a real skew symmetric matrix on $\mathbb{R}^d$ and let $  {A} \in
 {C}^{\infty}$.
Then 
\begin{enumerate} \renewcommand{\labelenumi}{(\roman{enumi})}
 \item   $\int \alpha_{\theta x}(A)dE(x)=\int dE(x)\alpha_{\theta x}(A)$
 \item $\left(\int \alpha_{\theta x}(A)dE(x)\right)^{*}\subset\int \alpha_{\theta
x}(A^{*})dE(x)$
\end{enumerate}
\end{lemma}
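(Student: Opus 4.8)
The plan is to work with the oscillatory-integral (strong-limit) representation of the warped convolution, since assertions about equality of operators and about adjoints are cleanest there. For part (i), I would start from
$$
\int\alpha_{\theta x}(A)\,dE(x)\,\Phi=(2\pi)^{-d}\lim_{\epsilon\to 0}\iint dx\,dy\,\chi(\epsilon x,\epsilon y)\,e^{-ixy}\,\alpha_{\theta x}(A)\,U(y)\Phi,
$$
and for the reversed ordering write the analogous expression with $U(y)$ moved to the left of $\alpha_{\theta x}(A)$. The key identity is the covariance relation $U(y)\,\alpha_{\theta x}(A)\,U(y)^{-1}=\alpha_{\theta x+y'}(A)$ for the appropriate shift; more precisely, conjugating $\alpha_{\theta x}(A)$ by $U(y)$ shifts the argument by $y$ composed with $\theta$, and after the change of variables $x\mapsto x$, $y\mapsto y$ together with using $e^{-ixy}$ and the skew-symmetry of $\theta$ (so that the quadratic form $x\cdot\theta x$ vanishes), the two double integrals are seen to coincide. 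I would carry out the substitution carefully, check that the cutoff $\chi$ can be transported along the change of variables (using $\chi(0,0)=1$ and dominated convergence as $\epsilon\to 0$), and conclude that the two strong limits agree on the dense domain $\mathcal{D}$.

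For part (ii), the strategy is to compute, for $\Phi,\Psi\in\mathcal{D}$, the matrix element $\langle\Phi,(\int\alpha_{\theta x}(A)\,dE(x))\Psi\rangle$ using the oscillatory representation, take complex conjugates, and recognize the result as a matrix element of $\int\alpha_{\theta x}(A^{*})\,dE(x)$. Concretely, $\overline{\langle\Phi,\alpha_{\theta x}(A)U(y)\Psi\rangle}=\langle U(y)\Psi,\alpha_{\theta x}(A^{*})\Phi\rangle$; after conjugating the phase $e^{-ixy}\mapsto e^{ixy}$ and the cutoff (which is real, or can be chosen real, or one uses $\overline{\chi}$ which still satisfies the normalization), a change of variables $x\mapsto -x$, $y\mapsto -y$ restores the original form but with $A$ replaced by $A^{*}$ and with $U(y)$ and $\alpha_{\theta x}(A^{*})$ in the opposite order, which by part (i) equals $\int\alpha_{\theta x}(A^{*})\,dE(x)$. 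This shows that on $\mathcal{D}$ the adjoint acts as claimed; the inclusion (rather than equality) of operators reflects that the adjoint of $A_\theta$ is a priori defined on the larger domain $\mathcal{D}(A_\theta^{*})\supseteq\mathcal{D}$, and one only controls it on $\mathcal{D}$, where it agrees with the densely defined operator $\int\alpha_{\theta x}(A^{*})\,dE(x)$.

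I expect the main obstacle to be the bookkeeping of the change of variables in the oscillatory integral together with the transport of the regularizing function $\chi$: one must ensure that after the substitution the new cutoff $\chi'(\epsilon x,\epsilon y)$ still tends to $1$ pointwise and stays dominated, so that the $\epsilon\to 0$ limits on both sides genuinely coincide rather than merely being formally equal. A secondary technical point is justifying that all manipulations stay within the dense domain $\mathcal{D}$ and that $\alpha_{\theta x}(A)\Phi$ remains smooth in the relevant variables, which is guaranteed by $A\in C^{\infty}$ and $\Phi\in\mathcal{D}$ but should be invoked explicitly. Once these analytic points are pinned down, both (i) and (ii) follow by elementary algebra using skew-symmetry of $\theta$ and the group law of $U$; alternatively, one may cite the corresponding statements in \cite{BLS, GL1} and merely indicate the adaptation, but I would prefer to give the short self-contained argument sketched above.
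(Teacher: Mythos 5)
Note first that the paper does not prove this lemma at all: it is quoted verbatim from the warped--convolution literature, and the reader is referred to \cite{BLS, GL1} for the proofs (the only place the paper comes close is the hermiticity computation in Proposition 3.5(c), which it says runs ``along the same lines''). So you are supplying an argument where the paper supplies a citation, and the argument you sketch is indeed the standard one from those references: represent both orderings as strong limits of oscillatory integrals on $\mathcal{D}$, commute $U(y)$ past $\alpha_{\theta x}(A)$ by covariance, change variables, and use skew-symmetry of $\theta$; for (ii), pass to matrix elements between vectors of $\mathcal{D}$ and reduce to (i). That is the right skeleton, and your attention to the cutoff $\chi$ (independence of the oscillatory integral from the choice of regularizer) is exactly the analytic point that the original works settle.

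There is, however, one step you should make precise, because as phrased it can fail. You say that conjugating $\alpha_{\theta x}(A)$ by $U(y)$ ``shifts the argument by $y$ composed with $\theta$'' and then absorb this into a change of variables. Conjugation by $U(y)$ shifts $\alpha_{\theta x}$ to $\alpha_{\theta x+y}$, i.e.\ by $y$ itself; to rewrite that as a shift of the integration variable $x$ you would need $\theta^{-1}$, and $\theta$ is \emph{not} assumed invertible here --- the admissible matrices used later in the paper have rank $2$ in general $d$, so any argument invoking $\theta^{-1}$ (as, incidentally, the paper's own Proposition 3.5(c) does) has a gap. The correct manipulation shifts the \emph{other} variable: substitute $y\mapsto y\pm\theta x$ (unit Jacobian), note $e^{-ix(y\pm\theta x)}=e^{-ixy}$ by skew-symmetry, and use $U(y\pm\theta x)=U(y)U(\pm\theta x)$ to bring both orderings to a common form such as $U(\theta x)\,A\,U(y)$; then conclude by cutoff-independence of the oscillatory integral. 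With that repair part (i) goes through, and your part (ii) is fine, except that what your matrix-element computation actually establishes is $\int\alpha_{\theta x}(A^{*})\,dE(x)\subset\bigl(\int\alpha_{\theta x}(A)\,dE(x)\bigr)^{*}$, i.e.\ the adjoint \emph{extends} the warped convolution of $A^{*}$ --- which is the form the cited works prove and the sensible reading of the inclusion printed in the paper; your closing remark about domains already points in this direction, so just state the inclusion that way.
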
 
Moreover, let us introduce the deformed product, also known as the Rieffel product
\cite{RI} 
by using warped convolutions. The two deformations are interrelated since   warped convolutions supply isometric
representations of Rieffel's strict deformations of $C^{*}$-dynamical systems with
actions of $\mathbb{R}^d$.
 \newline
\begin{lemma}\label{l2.1}
Let $\theta$ be a real skew-symmetric matrix on $\mathbb{R}^d$ and let $  {A},   {B} \in
 {C}^{\infty}$. Then 
\begin{equation*}
 {A}_{\theta}   {B}_{\theta} \Phi= (A\times_{\theta }B)_{\theta}\Phi, \qquad
\Phi\in\mathcal{D}.
\end{equation*}
where $\times_{\theta}$ is known as the Rieffel product
on ${C}^{\infty}$ and is given by, 
\begin{equation}\label{dp0}
(A\times_{\theta}B )\Phi=(2\pi)^{-d}
\lim_{\epsilon\rightarrow 0}
\iint  d x\, d y \,\chi(\epsilon x,\epsilon y )\,e^{-ixy} \, \alpha_{\theta x}(A)\alpha_{y}(B)\Phi.
\end{equation}
\end{lemma}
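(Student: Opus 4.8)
The plan is to prove Lemma~\ref{l2.1} by working directly with the oscillatory-integral representations and computing the composition $A_\theta B_\theta \Phi$ for $\Phi \in \mathcal{D}$. First I would write out $B_\theta \Phi$ using the strong-limit formula, so that $B_\theta\Phi = (2\pi)^{-d}\lim_{\epsilon\to 0}\iint dy\,dy'\,\chi(\epsilon y,\epsilon y')\,e^{-iyy'}\,\alpha_{\theta y}(B)\,U(y')\Phi$, and then apply $A_\theta$ in the form $\int \alpha_{\theta x}(A)\,dE(x)$ from the left. The key structural input is the covariance relation $U(a)\,\alpha_{\theta x}(A)\,U(a)^{-1} = \alpha_{\theta x}(\alpha_a(A)) = \alpha_{\theta x + a}(A)$ together with $U(a)$ commuting with the spectral resolution $E$ of $U$: pushing the factor $U(y')$ coming from $B_\theta\Phi$ through $A_\theta$ converts $\int \alpha_{\theta x}(A)\,dE(x)\,U(y')$ into $U(y')\int \alpha_{\theta(x) }(\alpha_{-y'}... )$, but more cleanly one uses part (i) of Lemma~\ref{wcl2} to write $A_\theta$ as $\int dE(x)\,\alpha_{\theta x}(A)$ and then the shift of the spectral variable $x \mapsto x$ against the translation generated by $U(y')$ acting on $\Phi$.

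Concretely, the mechanism is: in the iterated oscillatory integral for $A_\theta B_\theta \Phi$ one has an integrand of the schematic form $e^{-ixx'}e^{-iyy'}\,\alpha_{\theta x}(A)\,U(x')\,\alpha_{\theta y}(B)\,U(y')\Phi$ (with cutoffs), where $dE(x')$ has been re-expressed via the $\chi,e^{-ixx'}$ smearing against $U(x')$. Moving $U(x')$ to the right past $\alpha_{\theta y}(B)$ produces $\alpha_{\theta y + x'}(B)\,U(x')U(y')\Phi = \alpha_{\theta y + x'}(B)\,U(x'+y')\Phi$. Then a change of variables — shifting $y$ so as to absorb the $x'$-dependence in $\alpha_{\theta y + x'}(B)$, i.e. $y \to y - \theta^{-1}x'$ when $\theta$ is invertible, or more robustly combining the phase factors after the substitution $x' + y' \to$ new variable — collapses the four-fold integral, after carrying out one Gaussian-type/Fourier pairing, into the double integral $(2\pi)^{-d}\lim\iint dx\,dy\,\chi\,e^{-ixy}\,\alpha_{\theta x}(A)\,\alpha_y(B)\,U(\cdot)\Phi$ wrapped once more by $\int \cdots dE(\cdot)$; recognizing the inner part as $(A\times_\theta B)$ and the outer wrapping as $(\,\cdot\,)_\theta$ gives the claim. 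One must also invoke that $A\times_\theta B$ again lies in $C^\infty$ (a standard fact about the Rieffel product) so that $(A\times_\theta B)_\theta$ is defined on $\mathcal{D}$.

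I would organize the steps as: (1) insert the oscillatory representation for $B_\theta\Phi$ and for $A_\theta$ acting on the result; (2) justify interchanging the $\epsilon\to 0$ limits and the order of integration — here one uses that $\Phi\in\mathcal{D}$ is smooth for $U$, that $A,B\in C^\infty$ so $x\mapsto \alpha_{\theta x}(A)$ is norm-smooth and polynomially bounded after smearing, and that $\chi\in\mathscr{S}$ supplies the decay, so Fubini and dominated convergence apply to the regularized integrals; (3) use covariance $U(x')\alpha_{\theta y}(B) = \alpha_{\theta y + x'}(B)U(x')$ and $U(x')U(y')=U(x'+y')$; (4) perform the change of variables and the Fourier pairing that reduces four integrations to two; (5) identify the resulting expression with $(A\times_\theta B)_\theta\Phi$ using the definition \eqref{dp0} and Lemma~\ref{wcl2}(i).

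The main obstacle I expect is step (2): making the manipulations with the strong limits rigorous, since the objects are only defined as strong limits of regularized oscillatory integrals rather than as absolutely convergent Bochner integrals. The interchange of the two independent $\epsilon\to 0$ limits, the reordering of $dx\,dx'$ against $dy\,dy'$, and the change of variables all need to be performed at the level of the $\chi$-regularized integrands and only then passed to the limit, using the smoothness of $\Phi$ and of $\alpha_{\cdot}(A),\alpha_\cdot(B)$ to control everything in the norm of $\mathscr{H}$; this is exactly the kind of estimate for which the excerpt defers to \cite{BLS, RI}, and I would structure the proof to quote those estimates rather than reproduce them. A secondary subtlety is handling the case of non-invertible $\theta$, where the naive substitution $y \to y - \theta^{-1}x'$ is unavailable and one instead keeps the phase factors and uses the $e^{-ixx'}$ kernel to perform the reduction directly.
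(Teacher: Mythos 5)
First, a point of reference: the paper never proves this lemma itself --- Section \ref{s2} states explicitly that the proofs of the lemmas on warped convolutions are deferred to the original works \cite{BLS,GL1} (the statement is the warped-convolution version of Rieffel's product formula). So your proposal can only be measured against the argument in those references, and in outline you do follow it: expand both factors in the $\chi$-regularized representation, push $U(x')$ through $\alpha_{\theta y}(B)$ by covariance, change variables, and appeal to the oscillatory-integral calculus for the interchange of limits and the independence of the cutoff. That skeleton is the right one, and your identification of step (2) as the real analytic burden, to be quoted from \cite{BLS,RI}, matches how the paper itself treats the matter.

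Two details in your mechanism should be corrected, though. There is no ``collapse'' of the four-fold integral to a double integral by a Gaussian/Fourier pairing: the target $(A\times_{\theta}B)_{\theta}\Phi$ is itself a four-fold iterated oscillatory integral, since $\alpha_{\theta u}(A\times_{\theta}B)=\alpha_{\theta u}(A)\times_{\theta}\alpha_{\theta u}(B)$ expands into $\alpha_{\theta(u+v)}(A)\,\alpha_{\theta u+v'}(B)$ with phase $e^{-iuu'}e^{-ivv'}$; the identity closes by matching the two regularized four-fold integrals, not by reducing one of them. And no $\theta^{-1}$ should appear anywhere --- this matters, because the admissible matrices $\theta$ of this paper are degenerate, so $y\to y-\theta^{-1}x'$ is simply unavailable. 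The correct move is the measure-preserving linear substitution $u=y$, $v=x-y$, $v'=x'$, $u'=x'+y'$: after your covariance step the integrand of $A_{\theta}B_{\theta}\Phi$ is $e^{-ixx'}e^{-iyy'}\,\alpha_{\theta x}(A)\,\alpha_{\theta y+x'}(B)\,U(x'+y')\Phi$, and under this substitution it becomes identically $e^{-iuu'}e^{-ivv'}\,\alpha_{\theta(u+v)}(A)\,\alpha_{\theta u+v'}(B)\,U(u')\Phi$, because $uu'+vv'=y(x'+y')+(x-y)x'=xx'+yy'$. What then remains is exactly the machinery you propose to quote: moving $A_{\theta}$ inside the strong limit defining $B_{\theta}\Phi$, the fact that the substitution turns $\chi(\epsilon x,\epsilon x')\chi(\delta y,\delta y')$ into a different Schwartz cutoff (hence cutoff-independence of the oscillatory limit is needed), the interchange of the two regularizing limits, and $A\times_{\theta}B\in C^{\infty}$. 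With the substitution repaired, your plan is the standard proof of \cite{BLS}.
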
 
The next proposition gives the transformation property of the warped convolution of an
operator under the adjoint action of a unitary or anti-unitary operator on $\mathscr{H}$.
This is relevant since in Section \ref{s32} we examine the transformation
properties of  deformed operators under Poincar\'{e} transformations.  \newline
\begin{proposition}\label{blsp1}
 Let $W$ be a unitary or anti-unitary operator on $\mathscr{H}$ such that
$WU(x)W^{-1}=U(Mx)$, $x\in
\mathbb{R}^d$, for some invertible matrix $M$. Then, for ${A} \in
 {C}^{\infty}$, 
\begin{equation*}
 WA_{\theta}W^{-1}=(WAW^{-1})_{\sigma M \theta M^{T}},
\end{equation*}
where $M^{T}$ is the transpose of M w.r.t the chosen bilinear form, $\sigma=1$ if $W$ is
unitary
and $\sigma=-1$ if $W$ is anti-unitary.
\end{proposition}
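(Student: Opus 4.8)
The plan is to reduce the statement to the oscillatory-integral representation of the warped convolution and then to push the (anti-)unitary $W$ through that integral, tracking how it acts on each factor. Concretely, for $\Phi\in\mathcal{D}$ I would start from
\begin{equation*}
 A_{\theta}\Phi=(2\pi)^{-d}\lim_{\epsilon\to 0}\iint dx\,dy\,\chi(\epsilon x,\epsilon y)\,e^{-ixy}\,\alpha_{\theta x}(A)\,U(y)\Phi,
\end{equation*}
apply $W$ on the left and insert $W^{-1}W$ in front of $\Phi$, so that the integrand becomes $W\alpha_{\theta x}(A)W^{-1}\cdot WU(y)W^{-1}\cdot W\Phi$. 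Using $\alpha_{\theta x}(A)=U(\theta x)AU(\theta x)^{-1}$ together with the intertwining hypothesis $WU(z)W^{-1}=U(Mz)$, the first factor turns into $U(M\theta x)(WAW^{-1})U(M\theta x)^{-1}=\alpha_{M\theta x}(WAW^{-1})$ and the second into $U(My)$. One also checks $W\Phi\in\mathcal{D}$, since $W$ conjugates the smooth action of $U$ into the smooth action of $U(M\,\cdot\,)$, which has the same smooth vectors; this justifies that the resulting integral is again a warped convolution evaluated on a legitimate vector.

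The next step is the change of variables. After the substitution I have an integrand of the form $\chi(\epsilon x,\epsilon y)\,e^{\mp ixy}\,\alpha_{M\theta x}(WAW^{-1})\,U(My)W\Phi$, where the sign in the exponential is $-$ if $W$ is unitary and $+$ if $W$ is anti-unitary (because an anti-unitary $W$ complex-conjugates the scalar $e^{-ixy}$ and also the cutoff, but $\chi$ can be chosen real, or replaced by its conjugate which still lies in $\mathscr{S}$ with value $1$ at the origin). I would then substitute $x\mapsto (M^{T})^{-1}x$ and $y\mapsto M^{-1}y$; the Jacobian factors $|\det M|^{-1}$ combine, the phase becomes $e^{\mp i x y}$ again up to the sign already accounted for, i.e. the total phase is $e^{-i x (\sigma y)}$ with $\sigma=\pm1$, and the argument of $\alpha$ becomes $M\theta (M^{T})^{-1}x=(\sigma^{-1})\cdot$ nothing extra — more precisely one recognizes $M\theta M^{T}$ after also rescaling, so that collecting signs the deformation matrix is exactly $\sigma M\theta M^{T}$. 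Matching this against the oscillatory representation of $(WAW^{-1})_{\sigma M\theta M^{T}}$ (with the rescaled cutoff, still admissible) finishes the computation.

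The main obstacle, and the point that needs genuine care rather than bookkeeping, is the interchange of $W$ with the limit $\epsilon\to0$ and with the (strong) oscillatory integral, especially in the anti-unitary case: $W$ is only real-linear, so one cannot naively treat it as a bounded operator commuting with integration against a complex measure. I would handle this by noting that the $\epsilon$-regularized integrals are genuine Bochner/strong integrals of continuous compactly-supported $\mathscr{H}$-valued functions, on which a bounded real-linear (in particular anti-unitary) map does pass through the integral, with the only effect being complex conjugation of the scalar integrand $e^{-ixy}\chi(\epsilon x,\epsilon y)$; then one invokes the existence of the strong limit (guaranteed by the general theory recalled before Lemma~\ref{wcl2}) together with the continuity of the bounded map $W$ to exchange $W$ with $\lim_{\epsilon\to0}$. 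A secondary technical point is checking that $WAW^{-1}\in C^{\infty}$ relative to the action $z\mapsto\operatorname{Ad}U(Mz)$, which follows because conjugation by $W$ is a norm-isometric $*$-isomorphism (or $*$-anti-isomorphism) intertwining the two actions, hence preserves the smooth subalgebra. Once these two continuity/domain facts are in place, the remaining steps are the routine change-of-variables computation sketched above.
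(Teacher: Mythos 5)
Your overall strategy is the right one and is in fact the standard argument (the paper itself gives no proof of Proposition \ref{blsp1}, referring instead to the original work on warped convolutions, where the proof runs along exactly these lines): push $W$ through the $\epsilon$-regularized strong integrals, use the intertwining relation to turn $\alpha_{\theta x}(A)$ into $\alpha_{M\theta x}(WAW^{-1})$ and $U(y)$ into $U(My)$, keep track of the complex conjugation of the scalar factor in the anti-unitary case, and then change variables. Your continuity and domain remarks (passing a bounded real-linear map through the Bochner integrals and the strong limit, $W\Phi\in\mathcal{D}$, $WAW^{-1}\in C^{\infty}$ for the transformed action) are sound.

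The gap is in the central change of variables, which is precisely where the matrix $\sigma M\theta M^{T}$ has to come from. With the substitutions you state, $x\mapsto (M^{T})^{-1}x$ and $y\mapsto M^{-1}y$, the phase is \emph{not} restored, since $\langle (M^{T})^{-1}x,M^{-1}y\rangle=\langle (M^{T})^{-2}x,y\rangle\neq\langle x,y\rangle$ in general, and the argument of $\alpha$ becomes $M\theta (M^{T})^{-1}x$, which for generic invertible $M$ is not $M\theta M^{T}x$; your own text records this (``the argument of $\alpha$ becomes $M\theta(M^{T})^{-1}x$'') and then passes to the correct matrix only through the unexplained phrase ``after also rescaling'', which is not an argument. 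The correct bookkeeping is: substitute $y\to M^{-1}y$ \emph{and} $x\to M^{T}x$ (not $(M^{T})^{-1}x$). Then the two Jacobians $\vert\det M^{T}\vert$ and $\vert\det M\vert^{-1}$ cancel exactly (there is no leftover $\vert\det M\vert^{-1}$ to ``combine''), the phase is restored because $\langle M^{T}x,M^{-1}y\rangle=\langle x,y\rangle$ by the definition of the transpose w.r.t.\ the chosen bilinear form, the cutoff becomes $\chi(\epsilon M^{T}x,\epsilon M^{-1}y)$, which is again an admissible cutoff (the oscillatory integral being independent of that choice), and the argument of $\alpha$ becomes $M\theta M^{T}x$. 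In the anti-unitary case the conjugated phase $e^{+ixy}$ is brought back to $e^{-ixy}$ by the additional reflection $x\to -x$, which turns $M\theta M^{T}$ into $-M\theta M^{T}$ and yields $\sigma=-1$. With this corrected substitution your proof goes through; as written, the decisive identity is asserted rather than derived.
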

By using the former proposition and the homomorphism given in \cite{GL1} we
relate  skew symmetric matrices $\theta$ to wedges
$\mathcal{W}$. This in particular means that to
each deformed operator with deformation matrix $\theta$ there is a corresponding wedge
$\mathcal{W}$. 
\\\\
Most crucial to proving that the deformed fields satisfy a
weakened locality known as wedge locality is the following proposition. \newline
\begin{proposition}\label{pwlf}
 Let $A,B \in  {C}^{\infty}$ be operators such that $[\alpha_{\theta
x}(A),\alpha_{-\theta y}(B)]=0$ for all $x,y \in sp \,U$. Then 
\begin{equation*}
 [A_{\theta}, B_{-\theta}]=0.
\end{equation*}
\end{proposition}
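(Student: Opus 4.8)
The plan is to reduce the assertion to an equality of matrix elements and then play the hypothesis off against the skew-symmetry of $\theta$. Since $\mathcal{D}$ is dense, it suffices to establish $\langle\Psi,A_\theta B_{-\theta}\Phi\rangle=\langle\Psi,B_{-\theta}A_\theta\Phi\rangle$ for all $\Phi,\Psi\in\mathcal{D}$. For the bookkeeping of the operator-valued integrals that follow I would work in the oscillatory-integral picture of Section~\ref{s2} (or first restrict to vectors of bounded spectral support), so that every interchange of integration is legitimate exactly as in \cite{BLS,GL1}.

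The first step is to position the two warped convolutions so that $\alpha_{\theta x}(A)$ and $\alpha_{-\theta y}(B)$ wind up next to each other. Using Lemma~\ref{wcl2}(i) to pick the representation with $dE$ on the left for the outer factor and the one with $dE$ on the right for the inner factor, one obtains on $\mathcal{D}$
\begin{equation*}
A_\theta B_{-\theta}\Phi=\iint dE(x)\,\alpha_{\theta x}(A)\,\alpha_{-\theta y}(B)\,dE(y)\,\Phi,
\qquad
B_{-\theta}A_\theta\Phi=\iint dE(y)\,\alpha_{-\theta y}(B)\,\alpha_{\theta x}(A)\,dE(x)\,\Phi,
\end{equation*}
each an iterated strong Riemann--Stieltjes limit. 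The spectral measure restricts $x,y$ to $\mathrm{sp}\,U$, so the hypothesis applies under the first integral, and with $C(x,y):=\alpha_{-\theta y}(B)\,\alpha_{\theta x}(A)$ it becomes $A_\theta B_{-\theta}\Phi=\iint dE(x)\,C(x,y)\,dE(y)\,\Phi$. Relabelling $x\leftrightarrow y$ in the second expression and using $\alpha_a\alpha_b=\alpha_{a+b}$ gives $\alpha_{-\theta x}(B)\,\alpha_{\theta y}(A)=\alpha_{\theta(y-x)}(C(x,y))$, hence $B_{-\theta}A_\theta\Phi=\iint dE(x)\,\alpha_{\theta(y-x)}(C(x,y))\,dE(y)\,\Phi$. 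Everything then comes down to showing that inserting the automorphism $\alpha_{\theta(y-x)}$ between $dE(x)$ on the left and $dE(y)$ on the right changes nothing.

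This last step is the crux, and it is exactly where skew-symmetry of $\theta$ enters. Taking matrix elements against $\Psi$, one writes this second quantity as $\iint\langle dE(x)\Psi,\,U(\theta(y-x))\,C(x,y)\,U(-\theta(y-x))\,dE(y)\Phi\rangle$. Since $U$ commutes with $E$ and a translation $U(a)$ acts on the spectral subspace belonging to a point $p$ as multiplication by $e^{i\langle a,p\rangle}$, moving the left unitary onto $dE(x)\Psi$ (as its adjoint) and the right unitary onto $dE(y)\Phi$ produces the scalar factor $e^{i\langle\theta(y-x),\,x-y\rangle}$. Writing $z:=y-x$, the exponent equals $-\langle\theta z,z\rangle$, and skew-symmetry of $\theta$ relative to the chosen bilinear form forces $\langle\theta z,z\rangle=-\langle z,\theta z\rangle=-\langle\theta z,z\rangle=0$. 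So the factor is $1$, both matrix elements equal $\iint\langle dE(x)\Psi,\,C(x,y)\,dE(y)\Phi\rangle$, and therefore $[A_\theta,B_{-\theta}]=0$ on $\mathcal{D}$.

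The one genuinely delicate ingredient I anticipate is the manipulation of the operator-valued spectral integrals — the interchange of the order of integration and the reading-off of the phase as a Riemann--Stieltjes limit — which I would either carry out on vectors of bounded spectral support and then extend, or rephrase in the oscillatory-integral language of Section~\ref{s2}; given Lemma~\ref{wcl2} this is routine. Everything else is the elementary algebra above together with the single observation $\langle\theta z,z\rangle=0$.
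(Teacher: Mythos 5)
Your proposal is sound, and it is worth noting that the paper itself gives no proof of this proposition: Section~\ref{s2} quotes it from the warped-convolution literature and refers to \cite{BLS,GL1} for all proofs, and your mechanism --- using Lemma~\ref{wcl2}(i) to place $\alpha_{\theta x}(A)$ and $\alpha_{-\theta y}(B)$ adjacent between the two spectral measures so that the hypothesis applies for $x,y\in \mathrm{sp}\,U$, and then observing that the leftover conjugation by $U(\theta(y-x))$ only contributes the phase $e^{-i\langle\theta z,z\rangle}=1$ by skew-symmetry --- is the same mechanism as in the cited original proof. The only substantive work you defer, namely justifying the iterated spectral/oscillatory integrals on $\mathcal{D}$ (or on vectors of compact spectral support), is exactly the technical part carried out in \cite{BLS}, and you flag it appropriately.
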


In the next section  we adopt Formula (\ref{WC}) to define   warped convolutions with an unbounded
operator $X$ that is unitary equivalent to the momentum operator. Since we deform unbounded operators we are obliged to prove  that the deformation
formula, given as an oscillatory integral, is well-defined.  This is the subject of the next section.

\section{Deforming the scalar quantum field}
In this Section, we investigate the effect of deformation directly on a  free scalar field. The
unitary group  used for deformation, is given by the infinitesimal generator  $X$ that is unitary equivalent to the momentum operator.  Due to the unitary equivalence, the vector
operator $X$ is an essential self-adjoint operator on a dense domain and
therefore defines a strongly continuous unitary group that we denote by
$V(b):=e^{i b X }$. Furthermore, by using this Abelian group, an adjoint action can be
defined and used for  deformation in the framework of warped convolutions, \cite{BS,BLS}.
\begin{definition} Let the one-particle Hilbert space be given as 
$\mathscr{H}_{1}:=L^2(d^n\mu(\mathbf{p}),\mathbb{R}^n)=\{f : \int 
d^n\mu(\mathbf{p})  |f(\mathbf{p})|^2<\infty, d^n\mu(\mathbf{p}):= 
 (2\omega_{\mathbf{p}} )^{-1}\,d^n\mathbf{p}, (\omega_{\mathbf{p}},\mathbf{p})\in H_m^{+}:=\{p\in\mathbb{R}^{d}|p^2=m^2, p_0>0 \} 
\}$ for $d-1=n\geq 1$ and let $\Delta (P)$ be the 
dense domain of all functions from $ \mathscr{H}_{1} $ vanishing at 
infinity faster than any inverse polynomial in $p^k$ given as follows, \cite[Equation III.24]{SV}
\begin{equation}
\Delta  (P) =\{f\in \mathscr{H}_{1} :
|\left(\mathbf{p}^2
\right)^rf(\mathbf{p})|\leq c_r(f)<\infty;\quad r=0,1,2,\dots\}.
\end{equation}
 $\Delta (P)$ is contained in the domain of the essential self-adjoint momentum operators. The
extended dense domain of the second quantized momentum operator $P_{\mu}$  is given by $\Delta_k(P):=\bigotimes_{i=1}^{k} 
\Delta(P)$ (for details concerning second-quantization see \cite[Theorem VIII.33]{RS1} and \cite[Example 2]{RS1}). 
\end{definition}
\begin{definition} \label{pux1}
Let the operator $X_{\mu}$ be defined by a unitary
equivalence to the momentum operator as follows,
\begin{equation}\label{pux}
 X_{\mu}=\Gamma(V^{-1})P_{\mu}\Gamma(V),
\end{equation}
where the operator $\Gamma(V):=\bigoplus\limits_{k=0}^{\infty} V^{\otimes k}$ 
is the second quantization of a unitary operator $V:\mathscr{H}_{1}\rightarrow \mathscr{H}_{1} $ which may depend on several real parameters.
\end{definition}

\begin{proposition}
 The operator $X_{\mu}$ defined by unitary equivalence (see Definition \ref{pux1}) is an
essentially self-adjoint operator on the dense
domain 
\begin{equation} \Delta_k\,(X):=
 \Gamma(V^{-1})\Delta_k\, (P),
\end{equation}
  commuting along its components, i.e. 
\begin{equation*}
 [X_{\mu},X_{\nu}]=0.
\end{equation*}
Therefore, the following  operator
\begin{equation}\label{uopx}
V(p)=e^{ip_{\mu}X^{\mu}},
\end{equation}
 is unitary and defines a strongly continuous group for all $p\in\mathbb{R}^d$.
\end{proposition}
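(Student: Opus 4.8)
The plan is to derive every assertion from the single structural fact that $\Gamma(V)$ is a \emph{unitary} operator on the Fock space $\mathscr{H}$ whose inverse is $\Gamma(V^{-1})=\Gamma(V)^{*}$; this holds because $V$ is unitary on $\mathscr{H}_{1}$ and second quantization sends a one-particle unitary to a unitary on Fock space (cf.\ \cite[Theorem VIII.33]{RS1}). Granting this, essential self-adjointness, commutativity and the group property are all transported from $P_{\mu}$ to $X_{\mu}$ by conjugation, and no genuine work with the unbounded operators themselves is needed.

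First I would record that unitary conjugation preserves essential self-adjointness together with its cores: since $P_{\mu}$ is essentially self-adjoint on $\Delta_k(P)$ (this is the content of the preceding definition and the standard theory of second-quantized momenta), the operator $X_{\mu}=\Gamma(V^{-1})P_{\mu}\Gamma(V)$ is essentially self-adjoint on $\Gamma(V^{-1})\Delta_k(P)=\Delta_k(X)$, and this set is dense because $\Gamma(V^{-1})$ is a bijection of $\mathscr{H}$ carrying the dense set $\Delta_k(P)$ onto it. The closure $\overline{X_{\mu}}$ equals the self-adjoint operator $\Gamma(V^{-1})\,\overline{P_{\mu}}\,\Gamma(V)$, and its spectral resolution is $E_{X_{\mu}}(S)=\Gamma(V^{-1})E_{P_{\mu}}(S)\Gamma(V)$ for Borel sets $S$, again simply because conjugation by a fixed unitary intertwines functional calculi.

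For the commutativity I would first verify the algebraic identity $[X_{\mu},X_{\nu}]\Phi=\Gamma(V^{-1})[P_{\mu},P_{\nu}]\Gamma(V)\Phi=0$ for $\Phi\in\Delta_k(X)$, using that $\Gamma(V)$ maps $\Delta_k(X)$ onto $\Delta_k(P)$ and that the second-quantized momenta commute on $\Delta_k(P)$. Then I would upgrade this to \emph{strong} commutativity: the spectral projections $E_{P_{\mu}}(S)$ mutually commute (the translation group is abelian), hence so do the conjugated projections $E_{X_{\mu}}(S)$, which is exactly strong commutativity of the family $\{\overline{X_{\mu}}\}$. Passing from the formal bracket on a common core to commuting spectral measures is the only point requiring genuine care, and it is dispatched by the conjugation formula for the spectral resolution rather than by any manipulation with unbounded operators.

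Finally, strong commutativity permits a joint functional calculus, and the unitary group in \eqref{uopx} is obtained concretely as
\begin{equation*}
 V(p)=e^{ip_{\mu}X^{\mu}}=\Gamma(V^{-1})\,e^{ip_{\mu}P^{\mu}}\,\Gamma(V),\qquad p\in\mathbb{R}^{d}.
\end{equation*}
Since $p\mapsto e^{ip_{\mu}P^{\mu}}$ is the strongly continuous unitary translation group, conjugation by the \emph{fixed} unitary $\Gamma(V)$ preserves unitarity, the group law and strong continuity, so $V(p)$ has all the asserted properties, its infinitesimal generator being $p_{\mu}X^{\mu}$ by Stone's theorem. The dependence of $V$ on auxiliary real parameters is harmless, since once those parameters are fixed $\Gamma(V)$ is a fixed unitary and nothing above uses more than its unitarity.
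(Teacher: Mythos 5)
Your proposal is correct and follows essentially the same route as the paper: every property of $X_{\mu}$ (essential self-adjointness on $\Gamma(V^{-1})\Delta_k(P)$, density, commutativity, and the strongly continuous unitary group $V(p)=\Gamma(V^{-1})e^{ip_{\mu}P^{\mu}}\Gamma(V)$) is transported from the momentum operator by conjugation with the fixed unitary $\Gamma(V)$. The only difference is that you spell out the passage from the formal commutator on a core to strong commutativity via the conjugated spectral projections, a point the paper treats only by the formal computation $[X_{\mu},X_{\nu}]=\Gamma(V^{-1})[P_{\mu},P_{\nu}]\Gamma(V)=0$.
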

\begin{proof}
By unitary equivalence essential self-adjointness of the operator $X_{\mu}$ follows.  The density of the domain $\Delta_k\,(X)$ follows from the density of the unitary equivalent domain  $\Delta_k\,(P)$  (see \cite[Lemma 2]{SV}).
In order to show the commutation relations between the different components of
the operator $X_{\mu}$ we use the unitary equivalence to the commuting momentum operators.
\begin{align*}
  [X_{\mu},X_{\nu}]&=[\Gamma(V^{-1})P_{\mu}\Gamma(V),
\Gamma(V^{-1})P_{\nu}\Gamma(V)]\\&=
\Gamma(V^{-1})[P_{\mu} ,
 P_{\nu}]\Gamma(V)\\&=0,
\end{align*}
where in the last line we used the fact that the momentum operator commutes along its components. Since $X$ is a self-adjoint operator it follows from (\cite[Theorem IIIX.7]{RS1}) that $V(p)$ defines a strongly continuous unitary
group.
\end{proof}

\begin{definition}  \label{defqfx}
  Let $\theta$ be a real skew-symmetric matrix w.r.t. the Lorentzian scalar-product on
$\mathbb{R}^{d}$ and  
let $\chi \in\mathscr{S}(\mathbb{R}^d\times\mathbb{R}^d)$ with $\chi(0,0)=1$.  Furthermore, let
$\phi(f)$ be the massive free scalar 
field smeared out with functions $f \in \mathscr{S}(\mathbb{R}^d)$. Then,
the operator valued distribution $\phi(f)$ deformed with the    operator $X_{\mu}$
(see Definition \ref{pux1}), 
denoted as $\phi_{\theta,X}(f)$,  is defined on vectors of the dense domain 
$\Delta_k\,(X)$  as follows
\begin{align}\label{defqfx2}
\phi_{\theta,X}(f)\Psi_{k}:&=(2\pi)^{-d}
\lim_{\epsilon\rightarrow 0}
  \iint  \, dx \,  dy \, e^{-ixy}  \, \chi(\epsilon x,\epsilon y)\beta_{{\theta x}
}(\phi(f))V({y})\Psi_{k}\nonumber\\&=
(2\pi)^{-d}\lim_{\epsilon\rightarrow 0}
  \iint  \, dx \,  dy \, e^{-ixy}  \, \chi(\epsilon x,\epsilon y) \beta_{{\theta x}
}\left(a(\overline{f^-})+a^*(f^+)\right)V({y})\Psi_{k}\nonumber\\&
=:\left(a_{\theta,X}(\overline{f^-})+a_{\theta,X}^*(f^+)\right)\Psi_{k}.
\end{align}
The automorphism $\beta$ is defined by the adjoint action of the unitary operator $V(y)$ and the
test functions $f^{\pm}(\mathbf{p})$ in momentum space are defined as follows
\begin{equation}\label{tf}
f^{\pm}(\mathbf{p}):=\int dx \,f(x)e^{\pm ipx}, \qquad 
p=(\omega_{\mathbf{p}},\mathbf{p})\in H_m^{+}.
\end{equation}
\end{definition}
 The integral (\ref{defqfx2}) has to be understood as an integral in oscillatory sense,
\cite{RI}. The unboundedness of the operator $X_{\mu}$ questions the existence of the
integral since we are dealing with unbounded operator valued distributions. To
show that the integral (\ref{defqfx2}) converges   we use the unitary
equivalence. \newline\newline
The following lemma proves the existence of a
unitary transformation connecting the warped convolutions of a free  scalar field using
the momentum operator, and the warped convolutions of a free scalar field using the  
unitary equivalent operator $X$.\newline
\begin{lemma} \label{ldx1}
  For $f \in \mathscr{S}(\mathbb{R}^d)$ and $\Psi_k \in \Delta_k\,(X)$, a 
transformation exists that maps the field deformed with the momentum 
operator $\phi_{\theta,{P}}(f)$ to the field  deformed with  operator $X$, i.e. 
$\phi_{\theta,{X}}(f)$. This transformation is given as 
follows
\begin{equation}\label{ldx1e}
  \phi_{\theta,{X}}(f)\Psi_k=
 \Gamma(V^{-1})\,\phi(Vf)_{
\theta,{P}}\,\Gamma(V)\Psi_k.
\end{equation}
\end{lemma}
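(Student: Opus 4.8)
The plan is to prove the identity \eqref{ldx1e} by unwinding both sides to their defining oscillatory integrals and then moving the second-quantized unitary $\Gamma(V)$ through the integrand. Starting from the right-hand side, I would insert the definition of $\phi(Vf)_{\theta,P}$ as the warped convolution with the momentum operator, written in the oscillatory form
\begin{equation*}
\phi(Vf)_{\theta,P}\,\Gamma(V)\Psi_k = (2\pi)^{-d}\lim_{\epsilon\to 0}\iint dx\,dy\,e^{-ixy}\,\chi(\epsilon x,\epsilon y)\,\alpha_{\theta x}(\phi(Vf))\,U(y)\,\Gamma(V)\Psi_k,
\end{equation*}
where $\alpha$ is the adjoint action of the translation group $U(y)=e^{iy_\mu P^\mu}$. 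Conjugating the whole expression by $\Gamma(V^{-1})$ from the left and $\Gamma(V)$ implicitly, the key observation is that $\Gamma(V^{-1})U(y)\Gamma(V) = e^{iy_\mu\,\Gamma(V^{-1})P^\mu\Gamma(V)} = e^{iy_\mu X^\mu} = V(y)$ by Definition \ref{pux1} and functional calculus, and similarly $\Gamma(V^{-1})\alpha_{\theta x}(\phi(Vf))\Gamma(V) = \beta_{\theta x}\bigl(\Gamma(V^{-1})\phi(Vf)\Gamma(V)\bigr)$ since the adjoint action intertwines correctly.

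The second ingredient is the identity $\Gamma(V^{-1})\,\phi(Vf)\,\Gamma(V) = \phi(f)$, i.e.\ that conjugating the free field smeared with $Vf$ by the second quantization of $V$ returns the undeformed field smeared with $f$. This follows from the fact that $\phi(f) = a(\overline{f^-}) + a^*(f^+)$ and that $\Gamma(V)$ acts on creation/annihilation operators by $\Gamma(V^{-1})a^*(g)\Gamma(V) = a^*(V^{-1}g)$, together with how the map $f\mapsto f^{\pm}$ intertwines the action of $V$ on test functions in position space with its action on one-particle wavefunctions in momentum space. I would state this as the substantive computational step and keep it short, referring to the standard second-quantization formalism already cited in the excerpt (\cite{RS1}).

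Combining these two moves, the integrand on the right-hand side becomes exactly $e^{-ixy}\chi(\epsilon x,\epsilon y)\,\beta_{\theta x}(\phi(f))\,V(y)\Psi_k$, and taking the limit $\epsilon\to 0$ reproduces the left-hand side $\phi_{\theta,X}(f)\Psi_k$ by Definition \ref{defqfx}. The main obstacle I anticipate is not algebraic but analytic: one must justify that $\Gamma(V)$ can be pulled inside the oscillatory integral and past the strong limit $\epsilon\to 0$. Since $\Gamma(V)$ is a bounded (unitary) operator, it commutes with the $\epsilon\to 0$ strong limit and with the Bochner/oscillatory integral, provided one first checks that $\Gamma(V)\Psi_k$ still lies in the domain on which the momentum-deformed field is defined, namely $\Delta_k(P)$ — but this is immediate from the very definition $\Delta_k(X) = \Gamma(V^{-1})\Delta_k(P)$, so $\Psi_k\in\Delta_k(X)$ gives $\Gamma(V)\Psi_k\in\Delta_k(P)$. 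Hence the only care needed is to record that each finite-$\epsilon$ approximant is a genuine Bochner integral of a continuous compactly-supported integrand, conjugation by the bounded operator $\Gamma(V^{-1})$ passes through it, and the resulting net converges because the momentum-deformed expression does.
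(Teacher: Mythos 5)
Your proposal is correct and follows essentially the same route as the paper: both reduce the identity to the oscillatory-integral definition, use $V(y)=e^{iy_\mu X^\mu}=\Gamma(V^{-1})U(y)\Gamma(V)$ from Definition \ref{pux1}, and identify the conjugated field $\Gamma(V)\phi(f)\Gamma(V^{-1})$ with $\phi(Vf)$ (where $Vf$ acts on the one-particle wavefunctions $f^{\pm}$). The only differences are cosmetic — you run the computation from the right-hand side while the paper inserts $\Gamma(V)\Gamma(V^{-1})$ starting from the left, and you make explicit the (correct) remarks about pulling the bounded unitary through the strong limit and the domain fact $\Gamma(V)\Psi_k\in\Delta_k(P)$, which the paper leaves implicit.
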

\begin{proof} 
By using the unitary equivalence given in Equation (\ref{pux}), the lemma is easily proven
\begin{align*}
  \phi_{\theta,{X}}(f)\Psi_k&=(2\pi)^{-d}\lim_{\epsilon\rightarrow 0}
  \iint  \, dx \,  dy \, e^{-ixy}  \, \chi(\epsilon x,\epsilon y) \,V( {\theta
x})\phi(f)V(- {\theta
x}+ {y})\Psi_{k}\\
&=(2\pi)^{-d} \lim_{\epsilon\rightarrow 0}
  \iint  \, dx \,  dy \, e^{-ixy}  \, \chi(\epsilon x,\epsilon y)\,\Gamma(V^{-1}) U(
{\theta
x})\Gamma(V)  \phi(f)
\Gamma(V^{-1})\\&\hspace{8cm} \times U(- {\theta x}+ {y})\Gamma(V) 
\Psi_{k}
\\
&=
\Gamma(V^{-1})
\left(\Gamma(V)\phi(f)\Gamma(V^{-1})\right)_{
\theta, {P}}\Gamma(V) \Psi_k.
\end{align*}
\end{proof}
\begin{lemma} \label{ldx2}
  For  $\Phi_k \in\Delta_k\,(X) $ the familiar 
bounds of  the free field hold for the deformed field 
$\phi_{\theta, {X}}(f)$ and therefore the deformation with  operator $X_{\mu}$
 is well-defined.
\end{lemma}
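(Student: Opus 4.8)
The plan is to deduce these estimates directly from the corresponding ones for the field deformed with the momentum operator, transported through the unitary equivalence of Lemma \ref{ldx1}. Writing $\phi_{\theta,X}(f)\Psi_k = \Gamma(V^{-1})\,\phi(Vf)_{\theta,P}\,\Gamma(V)\Psi_k$ and using that $\Gamma(V^{-1}) = \Gamma(V)^{*}$ is unitary on Fock space, while $\Gamma(V)$ maps $\Delta_k(X) = \Gamma(V^{-1})\Delta_k(P)$ onto $\Delta_k(P)$, one gets at once
\begin{equation*}
\norm{\phi_{\theta,X}(f)\Psi_k} = \norm{\phi(Vf)_{\theta,P}\,\Gamma(V)\Psi_k},
\end{equation*}
so it suffices to estimate the momentum-deformed field on the vector $\Gamma(V)\Psi_k\in\Delta_k(P)$.

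First I would recall the familiar bounds of the undeformed free field: on the $n$-particle subspace $\norm{a(\bar\xi)\Phi_n}\leq\sqrt{n}\,\norm{\xi}_{\mathscr{H}_1}\norm{\Phi_n}$ and $\norm{a^{*}(\xi)\Phi_n}\leq\sqrt{n+1}\,\norm{\xi}_{\mathscr{H}_1}\norm{\Phi_n}$, equivalently the $N$-bound $\norm{\phi(g)\Phi}\leq c\,\norm{(N+1)^{1/2}\Phi}$. For the momentum-warped field the deformation only inserts, in the action of $a_{\theta,P}^{\#}$ on a symmetrised tensor, a unitary phase built out of $\theta$ and the spectral variables; it therefore leaves all the $L^2$-norms involved unchanged, so the same bounds hold for $\phi(g)_{\theta,P}$ with constants independent of $\theta$ and of the cut-off $\epsilon$ (this is essentially contained in \cite{BLS, GL1}). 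In particular, after rewriting through Lemma \ref{ldx1}, the cut-off integrands in (\ref{defqfx2}) are dominated uniformly in $\epsilon$, the strong oscillatory limit exists, and $\phi_{\theta,X}(f)$ is well-defined on $\Delta_k(X)$ and leaves the domain of $(N+1)^{1/2}$ invariant.

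The isospectral character of the construction enters precisely through the unitarity of $V$ on $\mathscr{H}_1$: this gives $\norm{(Vf)^{\pm}}_{\mathscr{H}_1}=\norm{f^{\pm}}_{\mathscr{H}_1}$, and since $\Gamma(V)$ commutes with the number operator $N$ it also gives $\norm{(N+1)^{1/2}\Gamma(V)\Psi_k}=\norm{(N+1)^{1/2}\Psi_k}$. Combining this with the previous paragraph yields
\begin{equation*}
\norm{\phi_{\theta,X}(f)\Psi_k}\leq c\,\big(\norm{f^{+}}_{\mathscr{H}_1}+\norm{f^{-}}_{\mathscr{H}_1}\big)\,\norm{(N+1)^{1/2}\Psi_k},
\end{equation*}
i.e.\ exactly the bound obeyed by the undeformed free field, with no restriction whatsoever on $\theta$ or on $V$.

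The step I expect to demand the most care is the test-function bookkeeping hidden in Lemma \ref{ldx1}: one must check that conjugating the smeared free field by $\Gamma(V)$ amounts on the one-particle level to replacing the positive- and negative-frequency parts $f^{\pm}$ (as in (\ref{tf}), restricted to $H_m^{+}$) by $Vf^{+}$ and $\overline{V\overline{f^{-}}}$ respectively, so that the relevant $\mathscr{H}_1$-norms are genuinely those of $f^{\pm}$ acted on by the unitary $V$. Once this is settled, the whole statement is a routine transcription of the momentum case through the unitary $\Gamma(V)$.
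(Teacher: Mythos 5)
Your proposal is correct and follows essentially the same route as the paper: transport the deformed field through the unitary equivalence of Lemma \ref{ldx1}, use unitarity of $\Gamma(V)$ to reduce to the momentum-deformed field on $\Delta_k(P)$, invoke the known $(N+1)^{1/2}$-bounds for $\phi(\cdot)_{\theta,P}$ from \cite{GL1,BLS}, and finish with $\norm{Vf^{\pm}}=\norm{f^{\pm}}$ by unitarity of $V$ on $\mathscr{H}_1$. Your added remarks (that $\Gamma(V)$ commutes with $N$, and the care needed in identifying the transformed one-particle arguments $Vf^{\pm}$) only make explicit steps the paper leaves implicit.
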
 
\begin{proof} 
By using Lemma \ref{ldx1} one obtains the familiar bounds for a free 
scalar field. For $\Phi_k \in\Delta_k\,(X)$ there exists a $\Psi_k \in \Delta_k\,(P)$ such
that the following holds
\begin{align*} \left\Vert\phi_{\theta, {X}}(f)\Phi_k\right\Vert&=
\left\Vert\phi_{\theta, {X}}(f)\Gamma(V^{-1}) \Psi_k\right\Vert\\&= \left\Vert 
(\Gamma(V)\phi(f)\Gamma(V^{-1}))_{\theta, {P}}\Psi_k
\right\Vert 
=\left\Vert (\phi( Vf))_{\theta, {P}}\Psi_k \right\Vert 
\\&\leq 
\left\Vert ( a( \overline{Vf^-}))_{\theta, {P}}\Psi_k \right\Vert
+\left\Vert (a^*( Vf^+))_{\theta, {P}}\Psi_k \right\Vert \\& \leq 
\left\Vert  Vf^{+} \right\Vert  \left\Vert (N+1)^{1/2 }\Psi_k 
\right\Vert
+ \left\Vert  Vf^{-}\right\Vert   \left\Vert (N+1)^{1/2 
}\Psi_k\right\Vert\\&=
\left\Vert   f^{+} \right\Vert  \left\Vert (N+1)^{1/2 }\Psi_k 
\right\Vert
+ \left\Vert  f^{-}\right\Vert   \left\Vert (N+1)^{1/2 
}\Psi_k\right\Vert.
\end{align*}
where in the last lines we used the triangle inequality, the Cauchy-Schwarz inequality and the 
bounds given in \cite{GL1}. \newline\newline
The obtained bounds are exactly the bounds of the free scalar field. Thus by the bounds of the free field it follows that the
field deformed with the operator $X_{\mu}$ is well-defined.

\end{proof}
 \subsection{Wightman properties of the deformed QF}
It is important to note that due to the unitary equivalence we can show that the
deformed field $\phi_{\theta,X}$ 
satisfies the Wightman properties with the exception of covariance and 
locality. This is the subject of the following proposition.  We shall
use the symbol $\mathscr{H}$ for  Bosonic Fockspace.\newline
\begin{proposition}\label{prop1x}
  Let $\theta$ be a real skew-symmetric matrix w.r.t. the Lorentzian scalar-product on
$\mathbb{R}^{d}$ and $f \in \mathscr{S}(\mathbb{R}^d)$.\\
\begin{itemize}
     \item[a)] The dense subspace $\mathcal{D}$ of vectors of finite 
particle number is contained in the domain 
$\mathcal{D}^{\theta, {X}}=\{\Psi\in \mathscr{H} |  \left\Vert 
\phi_{\theta, {X}}(f)\Psi\right\Vert^2 < \infty \}$ of any 
$\phi_{\theta, {X}}(f)$. Moreover, 
$\phi_{\theta, {X}}(f)\mathcal{D}\subset\mathcal{D}$ and 
$\phi_{\theta, {X}}(f)\Omega=\phi(f)\Omega$.\\
     \item[b)] For scalar fields deformed via warped convolutions and 
$\Psi\in\mathcal{D}$,
\begin{equation*}
f\longmapsto\phi_{\theta, {X}}(f)\Psi
\end{equation*}
is a vector valued tempered distribution.\\
     \item[c)]
For $\Psi\in\mathcal{D}$  and $\phi_{\theta, {X}}(f)$  the following holds
\begin{equation*}
\phi_{\theta, {X}}(f)^{*}\Psi=\phi_{\theta, {X}}(\overline{f})\Psi.
\end{equation*} 
For real $f\in\mathscr{S}(\mathbb{R}^{d})$, the deformed field 
$\phi_{\theta, {X}}(f)$ is essentially self-adjoint on
$\mathcal{D}$.\\
     \item[d)] The Reeh-Schlieder property holds: Given an  open set of 
space-time $\mathcal{O}\subset \mathbb{R}^d$, then
\begin{equation*}
\mathcal{D}_{\theta, {X}}(\mathcal{O}):=  
span\{\phi_{\theta, {X}}(f_1)\dots\phi_{\theta, {X}}(f_k)\Omega: k\in \mathbb{N}, 
f_1\dots f_k\in \mathscr{S}(\mathcal{O})\}
\end{equation*}
is dense in $\mathscr{H}$.
  \end{itemize}
\end{proposition}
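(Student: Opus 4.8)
The plan is to reduce every claim to the corresponding known statement for the undeformed free field via the unitary intertwining relation of Lemma \ref{ldx1}, namely $\phi_{\theta,X}(f)\Psi_k=\Gamma(V^{-1})\,\phi(Vf)_{\theta,P}\,\Gamma(V)\Psi_k$, together with the fact (Lemma \ref{ldx2}) that the familiar energy bounds $\|\phi_{\theta,X}(f)\Phi_k\|\le \|f^+\|\,\|(N+1)^{1/2}\Psi_k\|+\|f^-\|\,\|(N+1)^{1/2}\Psi_k\|$ hold. For part a), the energy bound immediately shows that $\mathcal{D}$, the span of vectors of finite particle number, lies in $\mathcal{D}^{\theta,X}$, since $(N+1)^{1/2}$ is everywhere defined on such vectors; invariance $\phi_{\theta,X}(f)\mathcal{D}\subset\mathcal{D}$ follows because $\Gamma(V^{\pm1})$ preserve the particle number grading and $\phi(Vf)_{\theta,P}$ raises or lowers particle number by one (warped convolution with the momentum operator does not change the number operator structure, since $U(y)$ and $\beta_{\theta x}$ are built from $P$ which commutes with $N$). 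Finally $\phi_{\theta,X}(f)\Omega=\phi(f)\Omega$ because $\Gamma(V)\Omega=\Omega$ (the Fock vacuum is fixed by any second-quantized unitary), $\phi(Vf)_{\theta,P}\Omega=\phi(Vf)\Omega$ (warping acts trivially on the vacuum, as $dE(x)$ is supported at $x=0$ there), and $\Gamma(V^{-1})\phi(Vf)\Omega=\Gamma(V^{-1})a^*(Vf^+)\Omega=a^*(V^{-1}Vf^+)\Omega=a^*(f^+)\Omega=\phi(f)\Omega$.

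For part b), temperateness: the map $f\mapsto Vf$ is continuous on $\mathscr{S}(\mathbb{R}^d)$ into $\mathscr{H}_1$ (more precisely $f\mapsto f^\pm$ is continuous $\mathscr{S}(\mathbb{R}^d)\to\mathscr{H}_1$ by \eqref{tf}, and $V$ is bounded), so composing with the vector-valued tempered distribution $f\mapsto\phi(f)_{\theta,P}\Psi$ for the momentum-deformed field (established in \cite{GL1,BLS}) and the bounded operators $\Gamma(V^{\pm1})$ yields a vector-valued tempered distribution. Concretely, for fixed $\Psi\in\mathcal{D}$ one estimates $\|\phi_{\theta,X}(f)\Psi\|$ by a Schwartz seminorm of $f$ using the bound from Lemma \ref{ldx2} and the continuity $\|f^\pm\|\le C\,p_N(f)$ for a suitable seminorm $p_N$.

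For part c), the adjoint relation $\phi_{\theta,X}(f)^*\Psi=\phi_{\theta,X}(\overline f)\Psi$ follows from Lemma \ref{wcl2}(ii) applied to $\phi(Vf)_{\theta,P}$ together with $\overline{Vf}=V\overline f$ (the unitary $V$ is assumed to commute with complex conjugation in the relevant sense, or one uses $(\overline{f})^\pm=\overline{f^\mp}$ and the structure of \eqref{defqfx2}), and the fact that $\Gamma(V^{-1})^*=\Gamma(V)$; essential self-adjointness on $\mathcal{D}$ for real $f$ then follows from Nelson's analytic-vector theorem exactly as for the free field, since the energy bounds of Lemma \ref{ldx2} guarantee that finite-particle vectors are analytic vectors for $\phi_{\theta,X}(f)$. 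For part d), the Reeh–Schlieder property, one applies the standard argument: since $\phi_{\theta,X}(f)\Omega=\phi(f)\Omega=a^*(f^+)\Omega$, the set $\mathcal{D}_{\theta,X}(\mathcal{O})$ contains, at one-particle level, all $f^+$ with $\operatorname{supp} f\subset\mathcal{O}$, and these span a dense subspace of $\mathscr{H}_1$ because the Fourier–Laplace transform of a test function supported in any open set cannot vanish on the mass shell (edge-of-the-wedge / analyticity argument); at higher particle number one uses the invariance $\phi_{\theta,X}(f)\mathcal{D}\subset\mathcal{D}$ from part a) plus an induction, noting that the deformation only rearranges the action by the number-preserving unitary $\Gamma(V)$, so cyclicity of $\Omega$ is inherited from the free field. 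The main obstacle is part d): one must check that the deformation does not destroy the analyticity input to Reeh–Schlieder; this is handled by observing that $\phi_{\theta,X}(f)\Omega$ is \emph{undeformed} by part a), so the one-particle space is reached exactly as in the free theory, and the higher-rank density follows from the standard Jost–Schroer / Reeh–Schlieder induction transported through $\Gamma(V)$.
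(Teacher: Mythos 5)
Parts a), b) and the self-adjointness half of c) are essentially the paper's own argument: reduce to the momentum-deformed field via Lemma \ref{ldx1}, use the free-field bounds of Lemma \ref{ldx2} for the domain statement and for temperateness, and get essential self-adjointness from analytic vectors and Nelson's theorem; your extra remarks on $\phi_{\theta,X}(f)\mathcal{D}\subset\mathcal{D}$ and on $\phi_{\theta,X}(f)\Omega=\phi(f)\Omega$ are sound. Two caveats on c): Lemma \ref{wcl2}(ii) is stated for bounded smooth elements, so (as the paper does) the adjoint relation should be obtained by redoing that computation directly on the oscillatory integral (\ref{defqfx2}); and your primary route invokes $\overline{Vf}=V\overline{f}$, an assumption on $V$ that is not part of the statement and is unnecessary --- either the direct computation or the operator form $\bigl(\Gamma(V)\phi(f)\Gamma(V^{-1})\bigr)^{*}\supset\Gamma(V)\phi(\overline{f})\Gamma(V^{-1})$ avoids it.

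The genuine gap is in d). You claim that beyond the one-particle level ``the deformation only rearranges the action by the number-preserving unitary $\Gamma(V)$, so cyclicity of $\Omega$ is inherited from the free field.'' That is not what the reduction gives: by Lemma \ref{ldx1}, $\phi_{\theta,X}(f_1)\cdots\phi_{\theta,X}(f_k)\Omega=\Gamma(V^{-1})\,\phi_{\theta,P}(Vf_1)\cdots\phi_{\theta,P}(Vf_k)\,\Omega$, so the problem is transported to the $P$-warped (Moyal) field of \cite{GL1}, not to the free field, and no fixed unitary relates products of warped fields on $\Omega$ to products of undeformed fields on $\Omega$. Concretely, choosing the $f_i$ with Fourier transforms disjoint from the lower mass shell, the resulting $k$-particle vectors are $\sqrt{k!}\,P_k\bigl(S_k(Vf_1^{+}\otimes\cdots\otimes Vf_k^{+})\bigr)$ with the momentum-dependent twist multiplier $S_k(p_1,\dots,p_k)=\prod_{l<j}e^{ip_l\theta p_j}$ produced by the warping itself. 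The entire content of the Reeh--Schlieder step is that such twisted symmetrized products are still total in $\mathscr{H}_k$ --- this is the argument of \cite{GL1} that the paper invokes after exhibiting exactly this formula, combined with the unitarity of $\Gamma(V)$ to conclude that the families $\{Vf^{+}\}$ remain total in $\mathscr{H}_1$ and that density is preserved when $\Gamma(V)$ is removed. Your ``induction transported through $\Gamma(V)$'' skips precisely this step and would equally well ``prove'' cyclicity if the twisted products failed to be dense; it therefore does not establish d). (Note also that the local-to-global analyticity step cannot be quoted verbatim for $\phi_{\theta,X}$, since translations do not act covariantly on it for general $X$; the paper addresses this by appealing to the fact that $V(y)$ has the same spectral properties as the translation group.)
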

\begin{proof} 
a) The fact that $\mathcal{D}\subset \mathcal{D}^{\theta, {X}}$,  follows 
 immediately from Lemma \ref{ldx2} , since the deformed scalar field 
satisfies the same bounds as a free field. The fact that the 
deformed field acting on the vacuum is the same as the free field acting 
on $\Omega$, can be easily shown due to the property of the unitary 
operators $V( {b})\Omega=\Omega$ (see \cite[Chapter X.7]{RS2}). 
\\\\
b) By using Lemma \ref{ldx2} one can see that the right hand side 
depends continuously on the function $f$, hence the temperateness of 
$f\longmapsto\phi_{\theta, {X}}(f)\Psi$,  $\Psi\in\mathcal{D}$ follows.\\\\
c) First, we prove hermiticity of the deformed field  $\phi_{\theta, {X}}(f)$. This is done
   along the same lines as the proof of Lemma \ref{wcl2}, demonstrating hermiticity of a deformed
operator if the undeformed one is self-adjoint. 
   \begin{align*}
  \phi_{\theta, {X}}(f)^{*}\Psi&= (2\pi)^{-d}\left(\lim_{\epsilon\rightarrow 0}
  \iint  \, dx \,  dy \, e^{-ixy}  \, \chi(\epsilon x,\epsilon y)
  \,\beta_{\theta x}(\phi(f))V({y})\right)^{*}\Psi 
\\&=(2\pi)^{-d}
 \lim_{\epsilon\rightarrow 0}
  \iint  \, dx \,  dy \, e^{-ixy}  \, \overline{\chi(\epsilon x,-\epsilon y)}\,V({y})\beta_{
{\theta
x}}(\phi(f))^* \Psi
  \\&=(2\pi)^{-d}
 \lim_{\epsilon\rightarrow 0}
  \iint  \, dx \,  dy \, e^{-ixy}  \, \overline{\chi(\epsilon (x+\theta^{-1}y) ,-\epsilon
y)}\,\beta_{ {\theta
x}}(\phi(\overline{f }))V(y)\Psi
  \\&=
\phi_{\theta, {X}}(\overline{f })\Psi.
\end{align*}
In the last lines we performed a variable substitution $\left(y_{\mu}\rightarrow -y_{\mu}\right)$
and $\left(x_{\mu}\rightarrow 
x_{\mu}+(\theta^{-1}y)_{\mu}\right)$.
 \\\\ For real
$f$ we can prove the essential  self-adjointness of the hermitian deformed field $\phi_{\theta,
{X}}(f)$. The first step consists in showing that the
deformed field has a dense set of analytic vectors. Next, by  
Nelson's analytic vector theorem, it  follows that the deformed field
$\phi_{\theta, {X}}(f)$ is essentially self-adjoint on this dense set of analytic vectors, (for
similar proof see \cite[Chapter I, Proposition 5.2.3]{BR}). \newline\newline  For $\Psi_{k} \in
\mathscr{H}_{k}$ the estimates of the $l$-power
of the deformed field $\phi_{\theta, {X}}(f)$, are given in the following
\begin{equation*}
 \left\Vert
\phi_{\theta, {X}}(f)^l\Psi_{k}\right\Vert\leq
2^{l/2}(k+l)^{1/2}(k+l-1)^{1/2}\cdots(k+1)^{1/2}\left\Vert f\right\Vert^{l}\left\Vert \Psi_{k}
\right\Vert,
\end{equation*}
where in the last lines we used Lemma \ref{ldx2} for the estimates of the deformed field. Finally,
we can write the sum
 
\begin{equation*}
 \sum\limits_{l\geq 0} \frac{\vert t\vert^l}{l!}\left\Vert
\phi(f)^l \Psi_k
\right\Vert
\leq
 \sum\limits_{l \geq 0} \frac{(\sqrt{2}|t|)^l}{l!}\left(
\frac{(k+l)!}{k!}\right)^{1/2}\left\Vert f\right\Vert^{l}\left\Vert \Psi_{k} \right\Vert<
\infty
\end{equation*}
for all $t\in \mathbb{C}$. It follows that each $\Psi\in \mathcal{D}$ is an analytic vector for
the deformed field $\phi_{\theta, {X}}(f)$. Since the set $\mathcal{D}$ is dense in
$\mathscr{H} $,  Nelson's analytic vector theorem implies that $\phi_{\theta, {X}}(f)$
is essentially self-adjoint on $\mathcal{D}$. 
\\\\
d) For the proof of the Reeh-Schlieder property we use the 
unitary equivalence given in Definition (\ref{pux1}).
  First note that the spectral properties of the unitary operator  $V( {y})$, are the same
as for the   unitary operator  $U( {y})$
 of translations. This leads to the application of the 
standard Reeh-Schlieder argument \cite{SW} which states that that 
$\mathcal{D}_{\theta}(\mathcal{O})$ is dense in $\mathscr{H} $ if and 
only if $\mathcal{D}_{\theta}( {\mathbb{R}^d})$ is dense in 
$\mathscr{H} $.  We choose the functions  $f_1,\dots,f_k \in \mathscr 
{S}(\mathbb{R}^{d})$ such that the Fourier transforms of the functions 
do not intersect the lower mass shell and therefore the domain 
$\mathcal{D}_{\theta}({\mathbb{R}^d})$ consists of the following vectors
   \begin{align*} 
   \Gamma(V)\phi_{\theta, {X}}(f_1)\dots
\phi_{\theta, {X}}(f_k)\Omega&=
   \Gamma(V)a_{\theta, {X}}^*(f^{+}_1)\dots
a_{\theta, {X}}^*(f^{+}_k)\Omega 
\\&=   \Gamma(V)   \Gamma(V^{-1}) 
     a_{\theta,
 {P}}^*( Vf^{+}_1)  \dots  a_{\theta,
 {P}}^*( Vf^{+}_k) \Gamma(V^{-1})
\Omega\\&=
  a_{\theta,
 {P}}^*( Vf^{+}_1)  \dots  a_{\theta,
 {P}}^*( Vf^{+}_k) 
\Omega\\&=
  \sqrt{m!}P_{m}(S_m( Vf^{+}_1\otimes \dots \otimes V
f^{+}_k)),
\end{align*}
where $P_k$ denotes the orthogonal projection from 
$\mathscr{H}_1^{\otimes k}$ onto its totally symmetric subspace 
$\mathscr{H}_{k}$, and $S_k \in \mathscr{B}(\mathscr{H}_1^{\otimes 
k})$ is the multiplication operator given as
    \begin{align*} S_k(p_1,\dots,p_k)=\prod \limits_{1\leq l < j \leq k} 
e^{{i}p_l \theta p_j }.
\end{align*}
Since the operator $\Gamma(V)$ is unitary,  
functions  $Vf^{+}_k \in \mathscr {S}(\mathbb{R}^{d})$ for $f^{+}_k \in
\mathscr {S}(\mathbb{R}^{d})$ will give
rise to dense sets of functions in $\mathscr{H}_1$. Following the 
same arguments as in \cite{GL1} the density of $\mathcal{D}_{\theta}( 
{\mathbb{R}^d})$ in $\mathscr{H}$ follows. Note that we proved the 
density for vectors $   \Gamma(V)\phi_{\theta, {X}}(f_1)\dots 
\phi_{\theta, {X}}(f_k)\Omega$ and not for the vectors without the application of $  
\Gamma(V) $ as stated in the proposition. However, we use the unitarity of $  
\Gamma(V)$ to argue that vectors dense in $\mathscr{H}$ stay dense after 
the application of a unitary operator.
\end{proof}

\subsection{Wedge-Covariance and Wedge-Locality}\label{s32}
The authors in \cite{GL1} constructed a map
$Q:W \mapsto Q(W)$ from a 
set $\mathcal{W}_{0}:=\mathcal{L}^{\uparrow}_{+}W_{1}$ of wedges, where 
$W_{1}:=\{x\in \mathbb {R}^d: x_1 > |x_0| \}$  to a set 
$\mathcal{Q}_{0}\subset \mathbb{R}^{-}_{d\times d}$ of skew-symmetric 
matrices. In the next step they considered the corresponding fields 
$\phi_{W}(x):=\phi(Q(W),x)$. \\\\ 
Hence, the correspondence is understood as a scalar field $ \phi(Q(W),x)$  on a NC space-time, 
which can be equivalently realized as a field defined on the wedge. The homomorphism $Q:W \mapsto Q(W)$   is given by the following definitions.

\begin{definition}Let  $\theta$ be a real skew-symmetric matrix on 
$\mathbb{R}^d$ then the map $\gamma_{\Lambda}(\theta)$
is defined as follows
\begin{align}\label{hm}
  \gamma_{\Lambda}(\theta):=
\left\{
\begin{array} {cc}
\Lambda\theta\Lambda^T, \qquad &\Lambda\in\mathcal{L}^{\uparrow}, \\ 
-\Lambda\theta\Lambda^T,\qquad &\Lambda\in\mathcal{L}^{\downarrow} .
\end{array} \right.
\end{align}
\end{definition}
\begin{definition}
$\theta$ is called an admissible matrix if the realization of the 
homomorphism $Q(\Lambda W)$ defined by the map 
$\gamma_{\Lambda}(\theta)$   is a well defined mapping. This is the case 
iff $\theta$ has in $d$ dimensions the following form
\begin{align}
\begin{pmatrix} 0 & \lambda & 0 & \cdots & 0 \\ \lambda& 0 & 0 & \cdots 
& 0  \\ 0 & 0 &0 &\cdots&0
\\ \vdots & \vdots &\vdots & \ddots& \vdots\\ 0 & 0 &0 & \cdots&0
  \end{pmatrix},\qquad \lambda\geq0.
\end{align}
For the physical most interesting case of 4 dimensions the 
skew-symmetric matrix $\theta$ has the more general form
\begin{align}\qquad\quad
\begin{pmatrix} 0 & \lambda & 0 & 0 \\ \lambda& 0 & 0 & 0  \\ 0 & 0 &0 &\eta
  \\ 0 & 0 &-\eta &0
  \end{pmatrix},\qquad \lambda\geq0, \eta \in \mathbb{R}.
\end{align}

\end{definition} 
By using the former definitions we give the following correspondence of the fields,
\begin{equation}\label{ex1}
\phi_W(f):=\phi(Q(W),f)=\phi_{\theta, {X}}( f). 
\end{equation}Next, we turn our attention to the covariance and locality of the defined fields. Wedge-covariance and -locality  seems to be the appropriate locality on non-commutative space-times,  \cite{Sol}.  In the following we  lay out the definitions of a wedge-covariant and a wedge-local field, (\cite{GL1}, Definition 3.2).
\newline
 \begin{definition} \label{drca}
Let $\phi=\{\phi_W: W\in\mathcal{W}_{0}\}$ denote the family of fields
satisfying the domain and continuity assumptions of the Wightman axioms. Then, the field  $\phi $
is defined to be a wedge-local quantum field if the following  
conditions are satisfied: 
\\ 
\begin{itemize}
      \item \textbf{Covariance:}  For any $W \in\mathcal{W}_{0} $ and 
$f\in\mathscr{S}(\mathbb{R}^d)$ the following holds
\begin{align*}
      U( y, \Lambda)\phi_W(f)U( y, \Lambda)^{-1}&=\phi_{\Lambda W}(f\circ( y,
\Lambda)^{-1} ),\qquad  (y, \Lambda)\in \mathcal{P}^{\uparrow}_{+},
\\
      U(0, j)\phi_W(f) U( 0,j)^{-1}&=\phi_{ jW}(\overline{f}\circ (0, j)^{-1}) ,
\end{align*}$\,$  where $j$ represents the space-time reflections, i.e. $x^{\mu}\rightarrow -x^{\mu}.$\\
      \item  \textbf{Wedge-locality:}  Let $W,\tilde{W}\in\mathcal{W}_{0}$ and
$f\in\mathscr{S}(\mathbb{R}^2)$. If 
\begin{equation*}
\overline{W+\text{supp } f}\subset
(\tilde{W}+\text{supp } g)',
\end{equation*}  \end{itemize}
then 
\begin{equation*}
 [\phi_{W}(f), \phi_{\tilde{W}}(g)]\Psi=0,\quad \Psi \in \mathcal{D}.
\end{equation*}

\end{definition}
The last definition can be given in a simpler form due to the geometrical properties of the
wedges. This is the subject of the following lemma, (\cite{GL1}, Lemma 3.3).\newline
\begin{lemma}
 Let $\phi=\{\phi_W: W\in\mathcal{W}_{0}\}$ denote the family of fields
satisfying the domain, continuity and covariance assumptions stated in Definition \ref{drca}. Then $\phi$ is
wedge-local if and only if
\begin{equation*}
 [\phi_{W_1}(f), \phi_{-{W}_1}(g)]\Psi=0,\quad \Psi \in \mathcal{D},
\end{equation*}
for all $f,g \in C_{0}^{\infty}(\mathbb{R}^d)$ with  $\text{supp } f\subset W_{1}$ and 
$\text{supp } g\subset -W_{1}$.
\end{lemma}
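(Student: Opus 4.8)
\emph{Proof plan.} The implication ``$\Rightarrow$'' is immediate: if $\phi$ is wedge-local in the sense of Definition~\ref{drca}, then for $f,g$ with $\supp f\subset W_1$ and $\supp g\subset -W_1$ one has $\overline{W_1+\supp f}\subset W_1\subset(-W_1+\supp g)'$, so $[\phi_{W_1}(f),\phi_{-W_1}(g)]$ annihilates $\mathcal{D}$. The substance is the converse, which I would obtain by transporting an arbitrary admissible configuration to the reference one by means of the covariance assumption and then applying the hypothesis.

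So fix $W,\tilde W\in\mathcal{W}_0$ and test functions $f,g$ with $\overline{W+\supp f}\subset(\tilde W+\supp g)'$; by density of $C_0^\infty$ in $\mathscr{S}(\mathbb{R}^d)$ and continuity of $f\mapsto\phi_W(f)\Psi$ it suffices to treat $f,g\in C_0^\infty(\mathbb{R}^d)$. The goal is to produce a transformation $g_0$ in the group generated by $\mathcal{P}^{\uparrow}_{+}$ and the space-time reflection $j$ with $g_0 W=W_1$, $g_0\tilde W=-W_1$, and $g_0(W+\supp f)\subset W_1$, $g_0(\tilde W+\supp g)\subset -W_1$; equivalently, the pushed-forward test functions $\tilde f:=f\circ g_0^{-1}$, $\tilde g:=g\circ g_0^{-1}$ satisfy $\supp\tilde f\subset W_1$, $\supp\tilde g\subset -W_1$. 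This is a statement about the geometry of wedges. Its ingredients are: the transitivity of $\mathcal{L}^{\uparrow}_{+}$ on $\mathcal{W}_0$, which lets one bring $W$ to $W_1$; the identity $W_1'=-W_1$, which together with the causal-complement hypothesis forces $\tilde W+\supp g\subset(W_1+c)'=-W_1+c$ for every $c\in\supp f$, hence $\tilde W$ to lie inside a translate of $-W_1$ and therefore, all wedges of $\mathcal{W}_0$ being congruent, to coincide with $-W_1$; and finally the translation structure of wedges together with $j$, which is used to slide the two spacelike-separated regions $W+\supp f$ and $\tilde W+\supp g$ into $W_1$ and $-W_1$. A minor covering argument passes from single wedges to the unions of translated wedges generated by the compact supports of $f$ and $g$.

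Granting such a $g_0$, the conclusion follows in one line. Let $U$ be the (anti-)unitary implementing $g_0$, a product of the operators $U(y,\Lambda)$ of Definition~\ref{drca} and at most one reflection $U(0,j)$. Covariance gives
\[
[\phi_W(f),\phi_{\tilde W}(g)]\Psi = U^{-1}\,[\phi_{W_1}(\tilde f),\phi_{-W_1}(\tilde g)]\,U\Psi,\qquad \Psi\in\mathcal{D},
\]
with $U\Psi\in\mathcal{D}$ since the domain of finite particle number is stable under the representation; if the reflection $j$ occurs, $\tilde f$ and $\tilde g$ are to be read with an extra complex conjugation, which is harmless because the hypothesis is not restricted to real test functions. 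Since $\supp\tilde f\subset W_1$ and $\supp\tilde g\subset -W_1$, the right-hand bracket vanishes by hypothesis, and, $U$ being (anti-)unitary, so does the left-hand one. This is exactly the wedge-locality of Definition~\ref{drca}.

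The hard part is the geometric standardization, i.e.\ the existence of $g_0$: proving that any two causally separated regions $W+C$, $\tilde W+C'$ with $W,\tilde W\in\mathcal{W}_0$ and $C,C'$ compact can be carried, by an element of $\langle\mathcal{P}^{\uparrow}_{+},j\rangle$, into $W_1$ and $-W_1$ respectively. Everything else --- the covariance bookkeeping and the use of (anti-)unitarity --- is routine. This geometric input is precisely the content underlying Lemma~3.3 of \cite{GL1}.
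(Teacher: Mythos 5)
The paper itself offers no proof of this lemma: it is quoted verbatim from \cite{GL1} (Lemma 3.3), so the relevant comparison is with the argument of that reference, and your plan reproduces it faithfully --- the forward direction is the trivial specialization, and the converse is the covariance reduction of an arbitrary admissible configuration to the reference pair $(W_1,-W_1)$, followed by conjugation with the implementing (anti-)unitary. Two points of precision. First, the transformation $g_0$ is best organized as a Lorentz transformation composed with a translation: in this framework a translation does not move the wedge label at all but only the support of the test function, $U(y,1)\phi_W(f)U(y,1)^{-1}=\phi_W(f\circ\tau_y^{-1})$, so the Lorentz part alone must achieve $\Lambda W=W_1$, $\Lambda\tilde W=-W_1$ (in $d=2$ this is where the reflection $j$ is genuinely needed, since $\mathcal{L}^{\uparrow}_{+}$ does not carry $W_1$ to $-W_1$ there), and the translation then slides the two compact supports into $W_1$ and $-W_1$ simultaneously. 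Second, the only step you leave as an assertion --- the existence of that common translation --- is exactly the crux, but it is elementary: after the Lorentz reduction the separation hypothesis becomes $\supp\tilde f-\supp\tilde g\subset W_1$, and in light-cone coordinates $u=x_1-x_0$, $v=x_1+x_0$ (the transverse components play no role for $W_1$) compactness gives $\min_{\supp\tilde f}u>\max_{\supp\tilde g}u$ and $\min_{\supp\tilde f}v>\max_{\supp\tilde g}v$; any translation whose $u$-component lies in the nonempty interval between $-\min_{\supp\tilde f}u$ and $-\max_{\supp\tilde g}u$, and analogously for $v$, moves $\supp\tilde f$ into $W_1$ and $\supp\tilde g$ into $-W_1$ (boundary configurations, which depend on the convention for the causal complement, are recovered by an additional small spatial shift of $f$ alone together with the continuity assumption). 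With that supplied, your argument is complete and coincides with the one the paper cites.
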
$\,$ \\So let us first investigate the wedge-covariance properties of our deformed fields. The result is given in the following proposition. 
\begin{proposition}\label{wl0} The deformed fields $\phi_{\theta, X}(    f)  $ transform  under the adjoint action of the proper orthochronous Poincar\'e group as follows,
\begin{align*} 
 U( x, \Lambda) \phi_{\theta, X}(    f)   U( x, \Lambda)^{-1}=\phi_{\theta, U( x, \Lambda)XU( x, \Lambda)^{-1}} (    f\circ  ( x,
\Lambda)^{-1}). \end{align*}
  Let the operator $X$ be covariant w.r.t.  the  proper orthochronous Lorentz group. Then,  the field is wedge-covariant w.r.t.  the  proper orthochronous Lorentz group
i.e. 
\begin{align*}  U( 0, \Lambda) \phi_{\theta, X}(    f)   U( 0, \Lambda)^{-1}=\phi_{\gamma_{\Lambda}(\theta) , X} (    f\circ  ( 0,
\Lambda)^{-1}) .
   \end{align*}
Moreover, if the operator $X$ is covariant w.r.t.  the  proper orthochronous Poincar\'e group and  the space-time reflections, then the field $\phi$ is a  wedge-covariant field. 
 \end{proposition}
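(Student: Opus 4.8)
The plan is to prove the three statements in turn, each one feeding into the next. For the first, \emph{unconditional}, transformation law I would start from the oscillatory–integral expression for $\phi_{\theta,X}(f)$ in Definition~\ref{defqfx} and conjugate it by $U(x,\Lambda)$ underneath the integral. Two ingredients do all the work: covariance of the \emph{undeformed} free field, $U(x,\Lambda)\,\phi(f)\,U(x,\Lambda)^{-1}=\phi\bigl(f\circ(x,\Lambda)^{-1}\bigr)$, and the elementary identity $U(x,\Lambda)\,V(b)\,U(x,\Lambda)^{-1}=e^{\,ib_{\mu}\,\tilde X^{\mu}}$ with $\tilde X^{\mu}:=U(x,\Lambda)X^{\mu}U(x,\Lambda)^{-1}$, i.e.\ conjugating the abelian group $V$ by a Poincar\'e unitary merely replaces its generator by $\tilde X$. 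Applying these to the integrand $\beta_{\theta x}(\phi(f))\,V(y)$ turns it into the integrand defining $\phi_{\theta,\tilde X}\bigl(f\circ(x,\Lambda)^{-1}\bigr)$, and the $\epsilon\to0$ limit may be pulled through the (bounded) conjugation. The object so obtained is well-defined by the argument of Lemma~\ref{ldx2}: $U(x,\Lambda)\Gamma(V^{-1})$ is again a second quantization, so $\tilde X$ is once more unitarily equivalent to the momentum operator and of the form in Definition~\ref{pux1}. This proves the first displayed formula.

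For the second assertion I would specialize to $(0,\Lambda)$, $\Lambda\in\mathcal{L}^{\uparrow}_{+}$, and invoke the hypothesis that $X$ is Lorentz covariant, $U(0,\Lambda)X^{\mu}U(0,\Lambda)^{-1}=\Lambda^{\mu}{}_{\nu}X^{\nu}$; the first formula then reads $U(0,\Lambda)\phi_{\theta,X}(f)U(0,\Lambda)^{-1}=\phi_{\theta,\Lambda X}\bigl(f\circ(0,\Lambda)^{-1}\bigr)$. It remains to move the Lorentz matrix off the generator and onto the deformation matrix, i.e.\ to show $\phi_{\theta,\Lambda X}(h)=\phi_{\gamma_{\Lambda}(\theta),X}(h)$. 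This is precisely the mechanism behind Proposition~\ref{blsp1}: the reparametrization $e^{\,ib_{\mu}(\Lambda X)^{\mu}}=e^{\,i(\Lambda^{T}b)_{\mu}X^{\mu}}$ of the unitary group amounts, after two linear changes of variables in the double integral (both with unit Jacobian, since $|\det\Lambda|=1$, and at the cost of replacing $\chi$ by another admissible cutoff), to the substitution $\theta\mapsto\Lambda\theta\Lambda^{T}$, with sign $+1$ because $U(0,\Lambda)$ is unitary; for $\Lambda\in\mathcal{L}^{\uparrow}$ this is exactly $\gamma_{\Lambda}(\theta)$ of~(\ref{hm}). Combining the two displays gives the claimed wedge-covariance under the proper orthochronous Lorentz group.

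For the third assertion I would check the two clauses of Definition~\ref{drca} separately, using the identifications $\phi_{W}(f)=\phi_{\theta,X}(f)$ and $\phi_{\Lambda W}(h)=\phi_{\gamma_{\Lambda}(\theta),X}(h)$ supplied by~(\ref{ex1}). The $\mathcal{P}^{\uparrow}_{+}$-clause is the argument of the previous paragraph run with $(y,\Lambda)$ in place of $(0,\Lambda)$: the Poincar\'e-covariance hypothesis gives $U(y,\Lambda)X^{\mu}U(y,\Lambda)^{-1}=\Lambda^{\mu}{}_{\nu}X^{\nu}$ (translations act trivially on $X$, just as on $P$), so again only the matrix-transfer step survives and yields $U(y,\Lambda)\phi_{W}(f)U(y,\Lambda)^{-1}=\phi_{\Lambda W}\bigl(f\circ(y,\Lambda)^{-1}\bigr)$. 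For the reflection clause, $U(0,j)$ is anti-unitary (in agreement with the complex conjugate $\overline f$ that appears in Definition~\ref{drca}); conjugating the oscillatory integral by it sends $e^{-ixy}$ to $e^{+ixy}$, turns $\phi(f)$ into $\phi(\overline f\circ j^{-1})$ by covariance, and replaces $V(y)$ by $e^{-iy_{\mu}(U(0,j)X^{\mu}U(0,j)^{-1})}$. Feeding in covariance of $X$ under $j$, $U(0,j)X^{\mu}U(0,j)^{-1}=-X^{\mu}$, the last group becomes $V(y)$ again, and a change of variables $x\mapsto-x$ restores the standard kernel while sending $\beta_{\theta x}\mapsto\beta_{-\theta x}$; hence $U(0,j)\phi_{\theta,X}(f)U(0,j)^{-1}=\phi_{-\theta,X}(\overline f\circ j^{-1})$. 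Since $j=-\mathbb{1}\in\mathcal{L}^{\downarrow}$, $(\ref{hm})$ gives $\gamma_{j}(\theta)=-(-\mathbb{1})\theta(-\mathbb{1})^{T}=-\theta$, so the right-hand side equals $\phi_{jW}\bigl(\overline f\circ(0,j)^{-1}\bigr)$, which is the second covariance requirement. Together with the $\mathcal{P}^{\uparrow}_{+}$-clause this shows that $\phi$ is a wedge-covariant field.

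The only genuinely delicate point is the matrix-transfer step — justifying rigorously, for the \emph{unbounded} generator $X$ and in the oscillatory / strong-limit sense, that reparametrizing the group $V$ (equivalently, replacing $X$ by $\Lambda X$) is the same as replacing $\theta$ by $\gamma_{\Lambda}(\theta)$. For bounded operators this is Proposition~\ref{blsp1}; here one must re-run that computation keeping track of the spectral resolution of $V$, the changes of variables in the double integral, the independence of the warped convolution from the cutoff $\chi$, and the domain $\Delta_{k}(X)$, checking that every manipulation survives the limit $\epsilon\to0$. Everything else is bookkeeping with the Lorentz indices and with the definition of $\gamma_{\Lambda}$.
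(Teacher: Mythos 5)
Your proposal is correct and follows essentially the same route as the paper: the first identity is obtained by conjugating the oscillatory integral, using covariance of the free field and the fact that conjugating $V(y)=e^{iyX}$ by $U(x,\Lambda)$ just replaces the generator by $U(x,\Lambda)XU(x,\Lambda)^{-1}$, while the wedge-covariance statements are reduced to the transformation law of warped convolutions under (anti-)unitaries, i.e.\ Proposition \ref{blsp1}. The paper simply cites that proposition (with $f\mapsto\overline{f}$ for reflections) where you spell out the matrix-transfer and the anti-unitary sign bookkeeping explicitly, which is a fair elaboration rather than a different argument.
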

 
\begin{proof} \begin{align*} &
 U( x, \Lambda) \phi_{\theta, X}(    f)   U( x, \Lambda)^{-1}\\&=(2\pi)^{-d} 
\lim_{\epsilon\rightarrow 0}
  \iint  \, dy \,  du \, e^{-iyu}  \, \chi(\epsilon y,\epsilon u)  U( x, \Lambda)\beta_{{\theta y}
}(\phi(f))V({u}) \,  U( x, \Lambda)^{-1}\\&=(2\pi)^{-d} 
\lim_{\epsilon\rightarrow 0}
  \iint  \, dy \,  du \, e^{-iyu}  \, \chi(\epsilon y,\epsilon u) V_{\Lambda, x}({\theta y}) 
  \,\phi(f\circ( x,
\Lambda)^{-1})\,V_{\Lambda, x}(-\theta y+ u ) \,   
\\&=  \phi_{\theta, U(\Lambda,x)XU(\Lambda,x)^{-1}} (    f\circ  ( x,
\Lambda)^{-1}) ,
\end{align*}
where $V_{\Lambda, x}(y):= U( x, \Lambda)V(y) U( x, \Lambda)^{-1}=e^{iy U( x, \Lambda)XU( x, \Lambda)^{-1}}$. 
 Now this expression is nothing else than the operator $X$ used for deformation but unitarly transformed. The second and third part follow from Proposition \ref{blsp1}, where   in the case of space-time reflections one replaces the smearing function $f$ with $\overline{f}$.\end{proof}
\begin{remark}
 An operator that is translation invariant is not equivalent to the momentum operator, for example 
\begin{align*}
 X=U( \Lambda)P U( \Lambda)^{-1} ,    \qquad X=e^{iaD}Pe^{-iaD} ,\end{align*}
 where $D$ is the dilatation operator  which is only essentially self-adjoint in the massless case, \cite{W3}. 
\end{remark}
What information do we gain from the former proposition? It gives us the transformational behavior of a field defined on a wedge that can be associated with a excitation on  a non-commutative space-time.  Under the  assumption of Lorentz covariance for the operator,  it states that the field obtained by a Poincar\'{e} transformation associates to a transformed field generated by deformation with $U( x)XU( x)^{-1}$. The interpretation of the result is the following.
Since the deformed fields generated by $X$ are associated to a non-commutative space-time, fields generated by  $U( x)XU( x)^{-1}$ correspond to fields on an equivalent but  translated quantum space-time. Hence, we  already are able to  deduce from this result that generators of constant quantum space-times shall be translationally invariant. This will be further studied in Section \ref{s5} where we examine the isomorphism to non-commutative space-times.  
\newline\newline
Next we turn to the original proof of wedge-locality. It is usually done by showing that the functions used to smear the field are entire analytic  and therefore they can be analytically continued to the complex upper half plane. The proof is done by introducing suitable coordinates given by,
\begin{equation*}
m_{\perp}:=(m^2+p_{\perp})^{1/2}, \qquad p_{\perp}:=(p_2,\dots,p_{n}),\qquad \vartheta:=\arcsinh\frac{p_1}{m_{\perp}}.
\end{equation*}
In the new coordinates we have the following measure and  on-shell momentum vector,
\begin{equation*}
d^n\mu(\mathbf{p})= d^{n-1}p_{\perp}d\vartheta,\qquad \qquad p(\vartheta):=\left(
\begin{array}{c}
m_{\perp}\cosh \vartheta \\
m_{\perp}\sinh \vartheta  \\
p_{\perp}
\\
\end{array}
\right)
\end{equation*}
By using these new coordinates, the analyticity of the function and the analytic continuation one obtains for the smeared functions $f\in
C_{0}^{\infty}(W_{1})$ and $g\in C_{0}^{\infty}(-W_{1})$, (see \ref{tf})
\begin{equation}\label{ac1}
 f^{-}(p_{\perp}, \vartheta+i\pi)=
 f^{+}(-p_{\perp},\vartheta) ,\qquad  g^{-}(p_{\perp},  \vartheta+i\pi)=
 g^{+}(-p_{\perp},\vartheta) .
\end{equation}
Now for the proof of wedge-locality   we have to demand that the unitary transformed functions, i.e. $Vf^{-}(p_{\perp},  \vartheta)$ and $V g^{+}(p_{\perp},  \vartheta)$ satisfy the demanded analyticity and analytical continuation properties.  Note that this  restrains  the unitary operators  used in the definition of the operator that is unitary equivalent to the momentum operator. By taking the former definition and lemma into account the following proposition concerning the deformed field
$\phi_{\theta, X}$ follows. \newline
\begin{proposition}\label{wl1}Let the unitary transformation $V$  leave the support  for all $f,g \in C_{0}^{\infty}(\mathbb{R}^d)$ with  $\text{supp } f\subset W_{1}$ and 
$\text{supp } g\subset -W_{1}$ covariant, i.e.
\begin{equation*}
  \text{supp } V f\subset W_{1}  \qquad
 \text{supp }V g\subset -W_{1}  .
\end{equation*}
Then, the family of fields $\phi=\{\phi_W: 
W\in\mathcal{W}_{0}\}$ defined by $\phi_W(f):=\phi(Q(W),f)=\phi_{\theta, {X}}( f)=\phi_{\theta, {V^{-1}PV}}( f)$
are wedge-local fields on the Bosonic Fockspace $\mathscr{H}^{+}$.
\end{proposition}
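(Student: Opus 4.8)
The plan is to reduce wedge-locality of $\phi_{\theta,X}$ to the already-established wedge-locality of the warped \emph{free} field of \cite{GL1}, using the unitary intertwiner of Lemma~\ref{ldx1} to strip off the $V$-dependence. First I would invoke the preceding Lemma (\cite{GL1}, Lemma~3.3) — its covariance hypothesis being supplied by Proposition~\ref{wl0} — so that it is enough to verify
$$[\phi_{W_1}(f),\phi_{-W_1}(g)]\Psi=0,\qquad \Psi\in\mathcal D,$$
for $f,g\in C_0^\infty(\mathbb R^d)$ with $\supp f\subset W_1$ and $\supp g\subset -W_1$. Under the homomorphism $Q$ of \cite{GL1} one has $Q(W_1)=\theta$ and $Q(-W_1)=-\theta$, so the fields to be compared are $\phi_{\theta,X}(f)$ and $\phi_{-\theta,X}(g)$.

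Next I would apply Lemma~\ref{ldx1} to each factor, $\phi_{\theta,X}(f)=\Gamma(V^{-1})\,\phi(Vf)_{\theta,P}\,\Gamma(V)$ and $\phi_{-\theta,X}(g)=\Gamma(V^{-1})\,\phi(Vg)_{-\theta,P}\,\Gamma(V)$. Since $\Gamma(V)$ is unitary and $\Gamma(V)\Delta_k(X)=\Delta_k(P)$ by Definition~\ref{pux1}, the inner $\Gamma$'s telescope and
$$[\phi_{\theta,X}(f),\phi_{-\theta,X}(g)]\Psi=\Gamma(V^{-1})\,\big[\phi(Vf)_{\theta,P},\phi(Vg)_{-\theta,P}\big]\,\Gamma(V)\Psi.$$
The hypothesis on $V$ says precisely that $\supp Vf\subset W_1$ and $\supp Vg\subset -W_1$; equivalently, the transformed one-particle functions $Vf^{-}(p_\perp,\vartheta)$ and $Vg^{+}(p_\perp,\vartheta)$ still extend analytically to the strip $0<\operatorname{Im}\vartheta<\pi$ with the boundary relations (\ref{ac1}). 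These are exactly the inputs used in \cite{GL1} — through Proposition~\ref{pwlf} with $U$ the translation group — to prove that the warped free scalar field is wedge-local, i.e. $[\phi(Vf)_{\theta,P},\phi(Vg)_{-\theta,P}]\Phi=0$ for $\Phi\in\mathcal D$. Applying this with the vector $\Gamma(V)\Psi\in\mathcal D$ yields $[\phi_{\theta,X}(f),\phi_{-\theta,X}(g)]\Psi=0$, and together with Proposition~\ref{wl0} this shows that $\phi=\{\phi_W:W\in\mathcal W_0\}$ is a wedge-local field.

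The step I expect to be delicate is justifying that the hypothesis ``$V$ leaves the wedge-supports covariant'' is genuinely the correct and sufficient condition — that it transfers verbatim into the analyticity and continuation data (\ref{ac1}) for $Vf^{-}$ and $Vg^{+}$ — and that $V$, acting on mass-shell momentum wave functions, may be inserted under the oscillatory integral without disturbing the $\epsilon\to0$ limit. Both concerns dissolve once Lemma~\ref{ldx1} is available, because there the entire $V$-dependence has been pushed onto the outer unitaries $\Gamma(V^{\pm1})$ and the surviving commutator is literally that of an ordinary warped free field; what remains is the routine bookkeeping of the wedge--matrix correspondence $Q(-W_1)=-\theta$, which is immediate from the homomorphism of \cite{GL1}.
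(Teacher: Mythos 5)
Your proposal is correct and follows essentially the same route as the paper: both rest on the unitary equivalence of Lemma~\ref{ldx1} to push the whole $V$-dependence into $\Gamma(V^{\pm 1})$ and the transformed test functions, so that wedge-locality reduces to the momentum-deformed field $\phi(Vf)_{\theta,P}$, $\phi(Vg)_{-\theta,P}$ with $\supp Vf\subset W_1$, $\supp Vg\subset -W_1$; the only difference in execution is that the paper verifies the hypothesis of Proposition~\ref{pwlf} by explicitly recomputing the two commutator terms and shifting the rapidity contour by $i\pi$, whereas you conjugate the fully deformed commutator and quote the wedge-locality theorem of \cite{GL1} wholesale, which amounts to the same analyticity input. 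One caution: Proposition~\ref{wl0} does \emph{not} unconditionally supply the covariance hypothesis of the reduction lemma (it assumes covariance of $X$, which is not among the hypotheses here); the paper itself concedes that wedge-covariance is not established and, exactly like you, only proves the vanishing of the $(W_1,-W_1)$ commutator, circumventing the issue through the reduction to $\phi_{\theta,P}(Vf)$.
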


\begin{proof}
For the proof  we use Proposition \ref{pwlf}, the unitary equivalence given in Lemma \ref{ldx1}
and the proof that the free scalar field deformed with the momentum operator is wedge local,
\cite{GL1}.
To use Proposition \ref{pwlf}, we have to show that the following commutator vanishes for $f\in
C_{0}^{\infty}(W_{1})$ and $g\in C_{0}^{\infty}(-W_{1})$,
\begin{align*}
[\beta_{\theta x}(\phi(f)),\beta_{-\theta y}(\phi(g))]&=
[\beta_{\theta x}(a(\overline{f^-})),\beta_{-\theta y}(a^*({g^+})]-
[\beta_{-\theta y}(a(\overline{g^-})),\beta_{\theta x}(a^*({f^+})]\\&=\Gamma(V^{-1})
[\alpha_{\theta x}\bigl( 
\phi(V f) 
\bigr),\alpha_{-\theta y}\bigl(
 \phi(V g) 
\bigr)
 ] \Gamma(V)
,
\end{align*}
where in the former lines all other terms are equal to zero and the unitary equivalence was used.  
 Let us first take a look at the first expression of the
commutator, 
\begin{align*} 
 &
[\alpha_{\theta x}\bigl(
\Gamma(V)
a(\overline{f^-})
\Gamma(V^{-1})
\bigr),\alpha_{-\theta y}\bigl(
\Gamma(V)
a^*({g^+})
\Gamma(V^{-1})
\bigr)
 ] 
\\&
=
\int d^n\mu(\mathbf{p})\int d^n\mu(\mathbf{k})
(Vf^{-})(\mathbf{p})(Vg^{+})(\mathbf{k}) e^{-ip\theta x} e^{-ik\theta y} 
[  a (\mathbf{p}) ,  a^* (\mathbf{k}) ] 
   \\& 
=
\int d^n\mu(\mathbf{p}) 
(Vf^{-})(\mathbf{p})(Vg^{+})(\mathbf{p}) e^{-ip\theta (x+y)}\\&
=
\int  d^{n-1}p_{\perp}d\vartheta\,
(Vf^{-})(p_{\perp},\vartheta)(Vg^{+})(p_{\perp},\vartheta)e^{-ip(\vartheta)\theta (x+y)}\\&
=
\int  d^{n-1}p_{\perp}d\vartheta\,
(Vf^{+})(-p_{\perp},\vartheta)(Vg^{-})(-p_{\perp},\vartheta)e^{-ip(\vartheta+i\pi)\theta (x+y)}\\&
=
\int d^n\mu(\mathbf{p}) 
 (Vf^{+})(\mathbf{p})(Vg^{-})(\mathbf{p}) e^{ ip\theta (x+y)},
\end{align*}
where in the last lines we used the unitary equivalence (\ref{pux}),  the boundedness and analyticity properties of the unitary transformed functions $f,g$ (see \cite[Proposition 3.4]{GL1}) and we shifted the contour of the integral from $\mathbb{R}$ to $\mathbb{R}+i\pi$.  Next, we look at the second expression of the
commutator and obtain the following, 
\begin{align*}&[
\alpha_{-\theta y}\bigl(
\Gamma(V)
a (\overline{g^-})
\Gamma(V^{-1})
\bigr),\alpha_{\theta x}\bigl(
\Gamma(V)
a^*( {f^+})
\Gamma(V^{-1})
\bigr)] \\&
=
\iint d^n\mu(\mathbf{p}) d^n\mu(\mathbf{k})
 (Vf^{+})(\mathbf{p})(Vg^{-})(\mathbf{k}) e^{ip\theta x} e^{ik\theta y}
[ a (\mathbf{k}), a^* (\mathbf{p})]
   \\&
=
\int d^n\mu(\mathbf{p}) 
(Vf^{+})(\mathbf{p})(Vg^{-})(\mathbf{p}) e^{ ip\theta (x+y)} .
\end{align*}
Since the second expression of the commutator $
[\beta_{\theta x}(\phi(f)),\beta_{-\theta y}(\phi(g))] $ is equal to the first one with a sign
difference, the commutator vanishes. Hence, the fields $\phi_{W}$ are wedge-local. 
\end{proof}$\,$\\
Concerning the wedge-covariance we imposed a strong requirement on the choice of our unitary operators. In particular, we demanded the unitary transformation $V$
to  leave the support  for all $f,g \in C_{0}^{\infty}(\mathbb{R}^d)$ with  $\text{supp } f\subset W_{1}$ and 
$\text{supp } g\subset -W_{1}$ covariant. Are there any examples of such transformations? This question will be answered positively by introducing a few examples.
\begin{example}
We first mention the Lorentz-transformation, i.e. $\Gamma(V)=U(\Lambda)$ with $\Lambda \in \text{SO}(1,1)\times \text{SO}(d-2)$.
\begin{proof}
 For the right wedge, i.e. $x_1>|x_0|$ we have to verify the following inequality 
\begin{equation*} 
(\Lambda x)_1>|(\Lambda x)_0|.
\end{equation*}
This can be easily verified by using the property of the  wedge and the explicit form of the Lorentz boost in $0-1$ direction,
\begin{align*} 
-\gamma \beta x_0+\gamma x_1&>|\gamma x_0-\gamma\beta x_1|\\ 
-  \beta x_0+  x_1&>|\  x_0- \beta x_1|,
\end{align*}since the Lorentz-factor $\gamma>0$, moreover
\begin{align*}  
-  \beta x_0+  x_1&> -  \beta x_0+|x_0|>0, 
\end{align*}since the velocity coefficient $ |\beta|<1$. Thus we obtain 
\begin{align*}  
(-  \beta x_0+  x_1)^2&>(x_0- \beta x_1)^2\\
x_1^2(1-\beta^2)&>x_0^2(1-\beta^2)\\
x_1&>|x_0|.
\end{align*}
The proof for the left wedge is analogous. 
\end{proof}
This example is well known and intuitively easy to understand, since the group $\text{SO}(1,1)\times \text{SO}(d-2)\subset\mathcal{L}_{+}^{\uparrow}$ is the stabilizer group $\mathcal{L}^{\uparrow}_{+}(W_1, ) \subset \mathcal{L}_{+}^{\uparrow}$ of $W_1$.   
\end{example}
\begin{example}
Second we mention the unitary operator of translations in the momentum space $\Gamma(V(\vec{k})) =e^{i\vec{k}\cdot\vec{X}}$, where $\vec{X}$ is the second-quantized Newton-Wigner-Pryce operator. It was studied thoroughly in a QFT-context in \cite{Muc2, Muc3}. In particular the operator acts on the particle operators as follows,
\begin{align*}  
\Gamma(V(\vec{k}))a(\mathbf{p})\Gamma(V(\vec{k})^{-1})=a(\mathbf{p}-\mathbf{k}),\qquad \Gamma(V(\vec{k}))a^{*}(\mathbf{p})\Gamma(V(\vec{k})^{-1})=a^{*}(\mathbf{p}-\mathbf{k}).
\end{align*}
Since the coordinate space remains invariant under such a transformation in the momentum space, the momentum-translation  $V(\vec{k})$ is our second most prominent example. 
\end{example}

\begin{example}
Note that the translation operator $U(y)=e^{iy_{\mu}P^{\mu}}$ for $y\in W_{1}$ leaves the support of $f\in C_{0}^{\infty}(\mathbb{R}^d)$ covariant (see \cite{BLS}). By using the former examples  we can take arbitrary arrangements of the operators $U(y)$, $\Gamma(V(\vec{k}) )$ and $U(\Lambda)$  and thus obtain a whole class of wedge-local fields. 
\end{example}

 \begin{example}Although we intend to focus on the massless scalar field in a forthcoming work, we mention in this context the special conformal transformation.  This operator leaves the wedge covariant and was intensively studied in \cite{MUc}. The operator $\Gamma(V)$ that gives the unitary equivalence to the momentum operator is in the special conformal case the inversion operator constructed  by \cite{SV}.
\end{example}

 \begin{example}Another interesting unitary operator that should be mentioned  in the massless case is given by the dilation operator, i.e. $\Gamma(V)=e^{ib D}$. It leaves the wedge covariant, since it represents merely a scale transformation  and the operator $ \Gamma(V^{-1}) P \Gamma(V) $ transforms covariantly under Poincar\'e transformations.
\end{example}

 The reader should be aware of the fact that wedge-covariance was not shown for the field $\phi_{\theta,X}$ although it is obligatory  when proving wedge-locality. 
Nevertheless, by reducing the proof of wedge-locality  for the field $\phi_{\theta,X}(f)$ to the field  $\phi_{\theta,P}(Vf)$, we were able to circumvent this particular problem.

\section{Scattering} 
The next task of this work is to calculate the Scattering-matrix  by using   \textbf{tempered polarization free generators}, \cite{Sch97,BBS}. 
In \cite{BBS} a framework was developed to calculate two-particle scattering of such a given theory, where the construction relies on the Haag-Ruelle scattering theory. In order to proceed let us briefly lay out the necessary definitions and properties.

\begin{definition}\label{pfg}
Let $W\in\mathcal{W}_{0}$ and $f\in\mathscr{S}(\mathbb{R}^d)$. Then the following properties constitute a tempered polarization free generator $\phi_{W}(f)$,
\\ 
\begin{itemize}
   \item[a)]   $\phi_{W}(f)$ is a wedge-local field.\\
     \item[b)]$\phi_{W}(f)$ and  $\phi_{W}(f)^*$ are closed operators with $\Omega$ contained in their respective domains. 
     \\ \item[c)]   $\phi_{W}(f)\Omega$ and $\phi_{W}(f)^*\Omega$ are single particle states.
 \\ \item[d)] $\phi_{W}(f)$  is said to be \textbf{temperate} if there is a dense subspace $\mathcal{D}$ of its domain which is stable under translation, such that
\begin{equation}
x\mapsto\phi_{W}(f) U(x)\Psi, \qquad\forall \Psi\in \mathcal{D}
\end{equation}
is strongly continuous and polynomially bounded in the norm for large $x$.
\end{itemize}
\end{definition}
\begin{lemma}
Let $W\in\mathcal{W}_{0}$ and $f\in\mathscr{S}(\mathbb{R}^d)$ and let the unitary operator $\Gamma( V )$  be as demanded in Proposition \ref{wl1}. Then, the set of fields  $\phi(Q(W),f)=\phi_{\theta, {X}}( f)$ constitute the properties of tempered polarization free generators. 
\end{lemma}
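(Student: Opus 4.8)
The plan is to verify the four defining properties (a)--(d) of Definition \ref{pfg} for the field $\phi_{\theta,X}(f)$, drawing on the results already established in Section \ref{s32} and Proposition \ref{prop1x}. Property (a), wedge-locality, is precisely the content of Proposition \ref{wl1}, which holds under the hypothesis that $\Gamma(V)$ leaves the wedge supports covariant; so (a) is immediate. Property (b) requires that $\phi_{\theta,X}(f)$ and $\phi_{\theta,X}(f)^*$ be closed operators with $\Omega$ in their domains: by Proposition \ref{prop1x}(a) we have $\Omega\in\mathcal{D}\subset\mathcal{D}^{\theta,X}$, and by Proposition \ref{prop1x}(c), $\phi_{\theta,X}(f)^*\Psi=\phi_{\theta,X}(\overline{f})\Psi$ on $\mathcal{D}$ with $\phi_{\theta,X}(f)$ essentially self-adjoint for real $f$; hence both operators are closable and one passes to their closures, which are closed by construction and still contain $\Omega$ in their domains.

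For property (c) I would compute $\phi_{\theta,X}(f)\Omega$ explicitly. Using the decomposition in Definition \ref{defqfx} and the vacuum property $V(y)\Omega=\Omega$ together with $a_{\theta,X}(\overline{f^-})\Omega=0$, one gets $\phi_{\theta,X}(f)\Omega=a^*_{\theta,X}(f^+)\Omega$, and by Lemma \ref{ldx1} this equals $\Gamma(V^{-1})a^*_{\theta,P}(Vf^+)\Gamma(V)\Omega=\Gamma(V^{-1})a^*_{\theta,P}(Vf^+)\Omega$, which is a single-particle state (the warped creation operator acting on the vacuum yields an undeformed one-particle vector, and $\Gamma(V^{-1})$ preserves the one-particle space). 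The same argument applied to $\overline f$ handles $\phi_{\theta,X}(f)^*\Omega$. So (c) reduces to the analogous fact for the momentum-deformed field established in \cite{GL1}, transported by the unitary $\Gamma(V)$.

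The substantive point is property (d), temperateness. Here I would take $\mathcal{D}$ to be the dense subspace of finite particle number (which is translation-stable, since $U(x)$ preserves particle number), and analyze $x\mapsto\phi_{\theta,X}(f)U(x)\Psi$. Strong continuity follows from Proposition \ref{prop1x}(b) (temperateness of $f\mapsto\phi_{\theta,X}(f)\Psi$) combined with strong continuity of $U(x)$. For the polynomial bound, I would use Lemma \ref{ldx1} to rewrite $\phi_{\theta,X}(f)U(x)\Psi=\Gamma(V^{-1})\phi(Vf)_{\theta,P}\Gamma(V)U(x)\Psi$, and then invoke the free-field bounds of Lemma \ref{ldx2}: the norm is controlled by $\|(N+1)^{1/2}\Gamma(V)U(x)\Psi\|$ up to $f$-dependent constants, and since $N$ commutes with $U(x)$ and with $\Gamma(V)$, this norm is simply $\|(N+1)^{1/2}\Psi\|$, which is $x$-independent — in fact bounded rather than merely polynomially bounded. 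The main obstacle I anticipate is being careful about domains: one must check that $\Gamma(V)U(x)\Psi$ stays in $\Delta_k(P)$ (equivalently $U(x)\Psi\in\Delta_k(X)$) so that the warped-convolution bounds apply, which should follow because $U(x)$ preserves the rapid-decrease domain $\Delta_k(P)$ and hence, after conjugation, $\Delta_k(X)$; if $\Gamma(V)$ fails to preserve $\Delta_k(P)$ one would instead restrict $\mathcal{D}$ to a translation-invariant core on which everything is well-defined, which is harmless for the scattering-theoretic application.
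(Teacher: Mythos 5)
Your proposal is correct and takes essentially the same route as the paper: wedge-locality from Proposition \ref{wl1}, closability, the vacuum and one-particle properties from Proposition \ref{prop1x}, and temperateness by transporting the field to $\phi_{\theta,P}(Vf)$ via Lemma \ref{ldx1} and then invoking the free-field bounds of Lemma \ref{ldx2}. The only minor deviation is the boundedness step, where the paper commutes $U(x)$ through the warped convolution to get $\phi_{\theta,\,U(x)^{-1}XU(x)}(f\circ(x))\Psi$ before applying the bounds, while you bound $\left\Vert (N+1)^{1/2}\Gamma(V)U(x)\Psi\right\Vert=\left\Vert (N+1)^{1/2}\Psi\right\Vert$ directly using that $N$ commutes with $U(x)$ and $\Gamma(V)$ --- both yield an $x$-uniform bound, and your explicit remark on the domain question (whether $U(x)\Psi$ stays in $\Delta_k(X)$) addresses a point the paper passes over silently.
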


\begin{proof}
The first item  in Definition \ref{pfg} is wedge-locality. This follows easily from the choice of the unitary operator $\Gamma( V )$  (see Proposition \ref{wl1}).
Item  b) holds since,  $\phi_{\theta, {X}}( f)$ is a densely defined and symmetric operator, hence closeable. To see that the vacuum vector is contained in the domain see  Proposition \ref{prop1x}, a). It is straightforward to prove that the deformed fields generate single particle states and this property follows as well from   Proposition \ref{prop1x}, a). \\\\ Now let us turn our attention to the hardest part of this proof, the temperateness. Proving   continuity for the expression $\phi_{\theta, X}(f) U(x)\Psi$ is equivalent to proving it for $\phi_{\theta, P}(V f)  \Gamma(V)  U(x)\Psi$,
\begin{align*}&
||  \phi_{\theta, P}(  V f)  \Gamma(V)  U(x)\Psi-\phi_{\theta, P}( V f)  \Gamma(V)   \Psi||=||  \phi_{\theta, P}( V f)  \Gamma(V)  (U(x)-1)\Psi ||\\&\leq 
\left\Vert V  f^{+} \right\Vert  \left\Vert (N+1)^{1/2 }  \Gamma(V)  (U(x)-1)\Psi  
\right\Vert
+ \left\Vert V f^{-}\right\Vert   \left\Vert (N+1)^{1/2 
}  \Gamma(V) (U(x)-1)\Psi \right\Vert\\& =
\left\Vert   f^{+} \right\Vert  \left\Vert (N+1)^{1/2 }  (U(x)-1)\Psi  
\right\Vert
+ \left\Vert  f^{-}\right\Vert   \left\Vert (N+1)^{1/2 
}   (U(x)-1)\Psi \right\Vert
\overset{x\rightarrow 0\,\,\,\,\,\,} {\xrightarrow{\hspace*{0.8cm}} 0},
\end{align*}
where in the last lines we used the fact that $\Psi \in \mathcal{D}$ and by applying  unitary operators that do not change the particle number on vectors of finite particle number, we have $ \Gamma(V)  U(x) \Psi \in \mathcal{D}$ and hence we can use the   bounds given in Lemma \ref{ldx2}. Moreover, in the last expression we use  the strong continuity of $U$ for  $ \Psi \in \mathcal{D}$ and thus the scalar product and the limit can be interchanged. \\\\
Of course, the boundedness can be proven by following similar arguments as for the continuity. Nevertheless, a more elegant route is chosen, i.e. 
\begin{align*} &
||  \phi_{\theta, X}(    f)   U(x)\Psi||=(2\pi)^{-d}||  
\lim_{\epsilon\rightarrow 0}
  \iint  \, dy \,  du \, e^{-iyu}  \, \chi(\epsilon y,\epsilon u)\beta_{{\theta y}
}(\phi(f))V({u}) \, U(x)\Psi||\\&=(2\pi)^{-d}||  
\lim_{\epsilon\rightarrow 0}
  \iint  \, dy \,  du \, e^{-iyu}  \, \chi(\epsilon y,\epsilon u) V_{-x}({\theta y}) 
  \,\phi(f\circ (-x))\,V_{-x}(-\theta y+ u ) \, \Psi||
\\&= || \phi_{\theta,U(x)^{-1}XU(x)} (    f\circ ( x))      \Psi|| \\ &\leq 
\left\Vert  V(f^{+}\circ ( x)) \right\Vert  \left\Vert (N+1)^{1/2 }\Psi
\right\Vert
+ \left\Vert  V(f^{-}\circ ( x))\right\Vert   \left\Vert (N+1)^{1/2 
}\Psi\right\Vert\\ &=
\left\Vert   f^{+} \right\Vert  \left\Vert (N+1)^{1/2 }\Psi
\right\Vert
+ \left\Vert  f^{-}\right\Vert   \left\Vert (N+1)^{1/2 
}\Psi\right\Vert. 
\end{align*}
where $V_{-x}(y) = U(x)^{-1}V(y) U(x)=e^{iy U(x)^{-1}XU(x)}$. Now this expression is nothing else than the operator $X$ used for deformation but unitarly transformed. Hence, we simply have another operator that is unitary equivalent to the momentum operator. By using the bounds in  Lemma \ref{ldx2}, boundedness follows.
\end{proof}$\,$\\
Next, let us define a function for $t\in\mathbb{R}$ and $f\in \mathscr{S}(\mathbb{R}^d)$ by, 

\begin{equation*}
f_t(x)=(2\pi)^{-d/2}\int dp \,\widetilde{f}(p)\,e^{ipx}e^{i(p_0-\omega_{\mathbf{p}})t}.
\end{equation*}
The support properties of the functions $f_t$ for asymptotic $t$ are used in the subsequent discussion. To proceed, let us define the velocity support of $f$ by, 

\begin{equation*} 
\Xi(f)=\{(1,\mathbf{p}/ \omega_{\mathbf{p}}): p\in \text{supp}\,\widetilde{f}\}.
\end{equation*}
It follows that the support of $f_t$ is contained in $t\,\Xi(f)$. 
Furthermore, the partial ordering of the sets with reference to the wedge $\mathcal{W}_0$ have to be introduced.
\begin{definition}
Let $\Xi_a$, $\Xi_b$ $\subset\mathbb{R}^d$ be compact sets. $\Xi_a$ is said to be the precursor of $\Xi_b$, $\Xi_a$ $\prec$ $\Xi_b$ in formula form, if $\Xi_a- \Xi_b$ is contained in $W\in\mathcal{W}_0$.
\end{definition}$\,$\\
By using the former definitions and sophisticated techniques the authors were able to show that $
\phi_{W}(f_t)\phi_{W'}(g_t)\Omega$, 
converges to the incoming respectively outgoing two-particle states for $t\rightarrow \pm \infty$. For the test functions $f,g$ with disjoint momentum supports in a small neighborhood of some point on the mass shell one obtains,
\begin{align*}  
\lim_{t\rightarrow   \infty}\phi_{W}(f_t)\phi_{W'}(g_t)\Omega&=\left(\phi_{W}(f )\Omega\times \phi_{W'}(g )\Omega\right)_{out} \qquad &&\text{if}  \qquad 
\Xi(g) \prec  \Xi(f),
\\
 \lim_{t\rightarrow  - \infty}\phi_{W}(f_t)\phi_{W'}(g_t)\Omega&=\left(\phi_{W}(f )\Omega\times \phi_{W'}(g )\Omega\right)_{in} \qquad &&\text{if}  \qquad 
\Xi(f) \prec  \Xi(g) ,
\end{align*}
where we used the standard notation for collision states. Our task is now to follow similar arguments made in \cite{GL1} in order to calculate the   amplitudes of a two-particle scattering. First note that the limits will depend on the wedge as well. Moreover, our model exhibits an independence of $t\in\mathbb{R}$ for the expression $\phi_{W}(f_t)\phi_{W'}(g_t)\Omega$. This in particular lies in the definition of $\phi_{W}(f_t)$, $f^{+},\,f^{+}_{t}$ and the support properties of $\widetilde{f}$.  The particular form of the scattering states are given in the following theorem.
\newline
 
\begin{theorem}
Let $W\in\mathcal{W}_{0}$ and $f\in\mathscr{S}(\mathbb{R}^d)$ and let the unitary operator $V$  be as demanded in Proposition \ref{wl1}. 
 Then the massive deformed field  $\phi_{\theta,X}$ satisfies the properties of a tempered polarization free generator and the explicit form of two-particle scattering
states are given for test functions $f,g\in \mathscr{S}(\mathbb{R}^d)$ by 
\begin{align*}
\lim_{t\rightarrow   \infty}&\phi_{W}(f_t)\phi_{W'}(g_t)\Omega= \Gamma( V^{-1})\phi_{\theta,P}(Vf^{+})\phi_{\theta,P}(Vg^{+})\Omega\quad &&\text{if}  \qquad 
\Xi(g) \prec  \Xi(f),\\
 \lim_{t\rightarrow  - \infty}&\phi_{W}(f_t)\phi_{W'}(g_t)\Omega=\Gamma( V^{-1})\phi_{\theta,P}(Vf^{+})\phi_{\theta,P}(Vg^{+})\Omega\quad &&\text{if}  \qquad 
\Xi(f) \prec  \Xi(g).
\end{align*}

\end{theorem}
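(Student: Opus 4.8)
The plan is to treat the theorem in two halves. For the polarization-free-generator properties nothing new is required: the preceding lemma already establishes items (a)--(d) of Definition \ref{pfg} for $\phi(Q(W),f)=\phi_{\theta,X}(f)$ under the hypothesis on $V$ taken from Proposition \ref{wl1} --- namely wedge-locality, closability with $\Omega$ in the domain of $\phi_{\theta,X}(f)$ and of its adjoint, the single-particle property $\phi_{\theta,X}(f)\Omega=\phi(f)\Omega$, and temperateness. Hence $\phi_{\theta,X}$ is a tempered polarization free generator and the Haag--Ruelle construction of \cite{BBS} applies: the vectors $\phi_{W}(f_t)\phi_{W'}(g_t)\Omega$ converge as $t\to\pm\infty$ to the outgoing, respectively incoming, two-particle collision states $\bigl(\phi_{W}(f)\Omega\times\phi_{W'}(g)\Omega\bigr)_{\mathrm{out/in}}$, provided the precursor conditions $\Xi(g)\prec\Xi(f)$, respectively $\Xi(f)\prec\Xi(g)$, hold. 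The only remaining work is to evaluate these collision states explicitly.

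For that I would use the unitary intertwiner of Lemma \ref{ldx1}. Writing $\phi_{\theta,X}(h)=\Gamma(V^{-1})\,\phi(Vh)_{\theta,P}\,\Gamma(V)$ and using $\Gamma(V)\Omega=\Omega$ from Proposition \ref{prop1x}(a), the telescoping product collapses to $\phi_{\theta,X}(f_t)\phi_{\theta,X}(g_t)\Omega=\Gamma(V^{-1})\,\phi_{\theta,P}(Vf_t)\,\phi_{\theta,P}(Vg_t)\,\Omega$, which is exactly the manipulation carried out in the Reeh--Schlieder part of Proposition \ref{prop1x}(d). Next I observe that only the creation parts survive when the product hits the vacuum and the asymptotic limit is taken, and that these depend on the test functions solely through $f_t^{+}$ and $g_t^{+}$; since $\widetilde{h_t}(p)=\widetilde h(p)\,e^{i(p_0-\omega_{\mathbf p})t}$ reduces to $\widetilde h(p)$ on the mass shell, one has $h_t^{+}=h^{+}$, which is precisely the $t$-independence of $\phi_{W}(f_t)\phi_{W'}(g_t)\Omega$ mentioned before the theorem. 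Invoking the corresponding result of \cite{GL1}, which identifies the momentum-deformed two-particle collision state with $\phi_{\theta,P}(Vf^{+})\phi_{\theta,P}(Vg^{+})\Omega$, then yields the asserted formula $\Gamma(V^{-1})\phi_{\theta,P}(Vf^{+})\phi_{\theta,P}(Vg^{+})\Omega$ for both the outgoing and the incoming case.

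The step I expect to be the main obstacle is controlling the interplay between the unitary $\Gamma(V)$ and the wedge geometry that enters the $t\to\pm\infty$ limit: one must verify that the precursor ordering of the velocity supports is not spoiled when the argument is transported to the momentum-deformed fields $\phi_{\theta,P}(Vf_t)$, and that the terms involving $f_t^{-}$, $g_t^{-}$ genuinely tend to zero rather than being silently absorbed into a redefinition of test functions. These are the delicate points already handled in \cite{GL1}; here the content is to check that the support constraint imposed on $V$ in Proposition \ref{wl1} is exactly what is needed to keep the hypotheses of that argument intact, after which the remainder is a routine transcription of the momentum-operator proof.
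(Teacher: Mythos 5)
Your proposal is correct and follows essentially the same route as the paper: the tempered polarization-free-generator properties are taken over from the preceding lemma, and the explicit two-particle states are obtained from the unitary equivalence $\phi_{\theta,X}(h)=\Gamma(V^{-1})\,\phi_{\theta,P}(Vh)\,\Gamma(V)$ together with $\Gamma(V)\Omega=\Omega$, the on-shell $t$-independence $h_t^{+}=h^{+}$, and the identification of the momentum-deformed collision states from \cite{GL1}. The paper's own proof is just a two-line citation of these facts, so your write-up merely supplies the details it leaves implicit.
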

\begin{proof} 
The property of tempered polarization is  the main result of Lemma \ref{pfg}. The scattering states can be simply calculated by using the unitary equivalence given in 
Lemma \ref{ldx1e}.

\end{proof}
 
\section{Isomorphism to a Non-commutative Space-time} \label{s5}
In \cite{GL1} the deformed fields correspond to free fields defined on the representation space of the Moyal-Weyl plane  $\mathcal{V}$. This correspondence is proven by defining a unitary operator which maps the Fock space $\mathscr{H}$ to the tensor product space  $\mathcal{V}\otimes\mathscr{H}$.  Hence, the fields deformed with the momentum operator are on one hand wedge-covariant, wedge-local and nontrivial and on the other hand they correspond to free fields on a non-commutative space-time (NCST).  Therefore, one question naturally arises in the context of this more general setting. To which  NCST do fields, deformed with the unitary transformed operator, correspond to?  In this section we partially answer this question by constructing a correspondence to a NCST.
\\\\
Let $\mathcal{V}$ be  the representation space of the *-algebra which is generated by the self-adjoint operators $\hat{x}$ that fulfill the commutator relation
\begin{equation}\label{xcr1}
 [\hat{x}_{\mu},\hat{x}_{\nu}]=-2i\theta_{\mu\nu},
\end{equation}
where $\theta$ is the center of the algebra. An isomorphism exists between the $*$-algebras of fields deformed with the momentum operator and the   $*$-algebra of the free fields on non-commutative Minkowski space $\mathcal{V}$, \cite{GL1}. This equivalence is given by  the following unitary operator $V_{P,\xi}= \bigoplus_{n=0}^{\infty}V^{(n)}_{P,\xi}:\mathscr{H}\rightarrow\mathcal{V}\otimes\mathscr{H}$,  with $\xi\in\mathcal{V}$ and $||\xi||_{\mathcal{V}}=1$,
\begin{equation}\label{nc1}
\left(V^{(n)}_{P,\xi}\Psi_n\right)\left(\mathbf{p}_1,\dots,\mathbf{p}_n\right)=\Psi_n\left(\mathbf{p}_1,\dots,\mathbf{p}_n\right)\cdot e^{i\sum\limits_{k=1}^{n}p_{k}\hat{x}}\xi,\qquad \Psi_n\in\mathscr{H}_n.
\end{equation}
Hence,   the following equations hold in a distributional sense 
\begin{align}\label{aot} a_{\otimes, P}(\mathbf{p}):=e^{-ip  \hat{x} }\otimes a(\mathbf{p})= 
 V_{P,\xi} a_{\theta,P}(\mathbf{p})V^{*}_{P,\xi},
\end{align}
where an analogous relation holds for the creation operator. Moreover, it follows from $V_{\theta,\xi}\Omega=\xi\otimes\Omega$  that the $n$-point functions of $\phi_{\otimes,P}$, i.e. the free fields on non-commutative Minkowski space, coincide with the those of the deformed field $\phi_{\theta,P}$,
\begin{equation}
 \langle  (\xi\otimes\Omega),\phi_{\otimes, P}(f_1)\dots\phi_{\otimes, P}(f_n)(\xi\otimes\Omega)\rangle=  \langle \Omega, \phi_{\theta,P}(f_1)\dots
\phi_{\theta,P}(f_n)\Omega\rangle. \end{equation}
Now since we deform with operators other than the momentum operator, we should obtain an isomorphism describing the equivalence of the deformed fields with fields living on different non-commutative space-times. These space-times correspond in a certain manner to the Moyal-Weyl since we deform with operators that are unitary equivalent to the momentum operator, that  in turn generates the Moyal-Weyl spacetime. One path leading to the newly generated non-commutative space-time is by using the twist deformation (see \cite{GW1, Tu, Zah, AB, Cha, Sol} and references therein). In particular one could calculate the NC space-time by using the twisted commutator between the coordinates as   already done for the special conformal operator in \cite{MUc}. Next, we examine the equivalence of our deformed fields with the twist deformation approach.   In this context the next lemma gives a unitary operator mapping the deformed fields $\phi_{\theta,X}$ to fields on a non-commutative space. 
\begin{proposition}Let the unitary operator $\widetilde{V}_{X,\xi}= \bigoplus_{n=0}^{\infty}\widetilde{V}^{(n)}_{X,\xi}:\mathscr{H}\rightarrow\mathcal{V}\otimes\mathscr{H}$,  with $\xi\in\mathcal{V}$ and $||\xi||_{\mathcal{V}}=1$, be given by unitarily equivalence to ${V}_{P,\xi}$ as follows,
\begin{equation}
 \widetilde{V}_{X,\xi}=(\mathbb{1}_{\mathcal{V}}\otimes \Gamma( V^{-1})){V}_{P,\xi}\Gamma( V ).
 \end{equation}
Then $\widetilde{V}_{X,\xi}$   is an isomorphism of the $*$- algebras generated by the deformed fields $\phi_{\theta, X}(f)$ to unitary equivalent $*$- algebras of the unitary transformed fields on the Moyal-Weyl space.  
\end{proposition}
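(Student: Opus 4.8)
The plan is to transport the isomorphism $V_{P,\xi}$ of \cite{GL1} through the unitary equivalence $\Gamma(V)$ exactly as we did for the field, the scattering states and the Reeh--Schlieder property in the previous sections. First I would observe that $\widetilde{V}_{X,\xi}$ is a composite of three unitaries --- $\mathbb{1}_{\mathcal{V}}\otimes\Gamma(V^{-1})$, the unitary $V_{P,\xi}:\mathscr{H}\to\mathcal{V}\otimes\mathscr{H}$ of \eqref{nc1}, and $\Gamma(V)$ --- hence itself a unitary operator from $\mathscr{H}$ onto $\mathcal{V}\otimes\mathscr{H}$. This settles the ``isomorphism'' part at the level of Hilbert spaces; what remains is to check that it intertwines the relevant $*$-algebras.

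The key computational step is to conjugate the deformed annihilation/creation operators. Using Lemma~\ref{ldx1} in its distributional form, $a_{\theta,X}(\mathbf{p})=\Gamma(V^{-1})\,a_{\theta,P}(V\mathbf{p})\,\Gamma(V)$ (with the obvious meaning for $a_{\theta,P}(V\,\cdot\,)$ coming from $\phi(Vf)$), one computes
\begin{align*}
\widetilde{V}_{X,\xi}\,a_{\theta,X}(\mathbf{p})\,\widetilde{V}_{X,\xi}^{\,*}
&=(\mathbb{1}_{\mathcal{V}}\otimes\Gamma(V^{-1}))\,V_{P,\xi}\,\Gamma(V)\,\Gamma(V^{-1})\,a_{\theta,P}(V\mathbf{p})\,\Gamma(V)\,\Gamma(V^{-1})\,V_{P,\xi}^{*}\,(\mathbb{1}_{\mathcal{V}}\otimes\Gamma(V))\\
&=(\mathbb{1}_{\mathcal{V}}\otimes\Gamma(V^{-1}))\,\bigl(a_{\otimes,P}(V\mathbf{p})\bigr)\,(\mathbb{1}_{\mathcal{V}}\otimes\Gamma(V)),
\end{align*}
where in the last line \eqref{aot} was used. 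Smearing with a test function $f$ and collecting the $\Gamma(V)$-factors on the one-particle space shows that $\widetilde{V}_{X,\xi}\,\phi_{\theta,X}(f)\,\widetilde{V}_{X,\xi}^{\,*}$ equals $(\mathbb{1}_{\mathcal{V}}\otimes\Gamma(V^{-1}))\,\phi_{\otimes,P}(Vf)\,(\mathbb{1}_{\mathcal{V}}\otimes\Gamma(V))$, i.e.\ the field on the Moyal--Weyl space $\mathcal{V}$ unitarily transformed by $\mathbb{1}_{\mathcal{V}}\otimes\Gamma(V)$. Taking products of such fields and using that conjugation by a unitary is a $*$-homomorphism then gives the claimed isomorphism of the generated $*$-algebras, together with the compatibility of vacuum expectation values via $\widetilde{V}_{X,\xi}\Omega=(\mathbb{1}_{\mathcal{V}}\otimes\Gamma(V^{-1}))V_{P,\xi}\Gamma(V)\Omega=(\mathbb{1}_{\mathcal{V}}\otimes\Gamma(V^{-1}))(\xi\otimes\Omega)=\xi\otimes\Omega$, since $\Gamma(V)\Omega=\Omega$.

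The main obstacle is bookkeeping rather than anything deep: one must make the distributional identity $a_{\theta,X}(\mathbf{p})=\Gamma(V^{-1})a_{\theta,P}(V\mathbf{p})\Gamma(V)$ precise --- the action of $V$ on the one-particle wave function is what ties $\phi_{\theta,X}(f)$ to $\phi_{\theta,P}(Vf)$, and one should check that the oscillatory integrals are being manipulated on the common invariant domain $\Delta_k(X)$, so that all conjugations are legitimate and the resulting operators agree as operators, not merely formally. Once the intertwining relation for a single smeared field is established on this domain, the extension to arbitrary words in the $\phi_{\theta,X}(f_i)$ and the transfer of $n$-point functions are immediate. I would also remark, as the paper hints, that composing $\widetilde{V}_{X,\xi}$ with the twist-deformation picture then identifies the pertinent non-commutative space-time via the twisted commutator of coordinates, but that identification is a separate matter and need not enter this proof.
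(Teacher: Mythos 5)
Your proposal is correct and follows essentially the same route as the paper: conjugate $\phi_{\theta,X}(f)$ by $\widetilde{V}_{X,\xi}$, let the $\Gamma(V)$-factors convert it to $\phi_{\theta,P}(Vf)$ via Lemma~\ref{ldx1}, and then apply the known relation $V_{P,\xi}\,\phi_{\theta,P}(Vf)\,V^{*}_{P,\xi}=\phi_{\otimes,P}(Vf)$ from Equation~(\ref{nc1})/(\ref{aot}), leaving the result wrapped by $\mathbb{1}_{\mathcal{V}}\otimes\Gamma(V^{\mp1})$. Your additional remarks on unitarity of the composite, the vacuum vector, the extension to products and the domain bookkeeping are sound refinements of the same argument, which the paper states more briefly in the sense of distributions.
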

\begin{proof}
Prior to  the proof let us give the following expression,
\begin{align*}  
V_{P,\xi}  \,\phi_{\theta,P}(Vf)\, V^{*}_{P,\xi}= \phi_{\otimes, P} (Vf) ,
\end{align*}
where this relation can be easily seen by the virtue of Equation (\ref{nc1}). In the next step we calculate the adjoint action of  $\widetilde{V}_{X,\xi}$ on the   the  deformed fields $\phi_{\theta,X}(f)$.
\begin{align*}  
&(\mathbb{1}_{\mathcal{V}}\otimes V^{-1}){V}_{P,\xi}\underbrace{V\left(\phi_{\theta,X}(f)\right)V^{-1}}_{\phi_{\theta,P}(Vf)}V^{*}_{P,\xi}(\mathbb{1}_{\mathcal{V}}\otimes V)=& (\mathbb{1}_{\mathcal{V}}\otimes V^{-1})\phi _{\otimes, P} (Vf) (\mathbb{1}_{\mathcal{V}}\otimes V)
\end{align*} 
The equivalence can also be proven on the level of the $n$-point functions. Hence, we have shown that the deformed fields  $\phi_{\theta, X}(f)$ are unitarily equivalent to the transformed fields that live on the Moyal-Weyl space, i.e. $\phi _{\otimes, P} (Vf)$. Note that the former equations hold  in the sense of distributions.
\end{proof}$\,$\\
The following notation is self explanatory,
\begin{align}\label{tfx}  
 \phi_{\otimes, X} ( f):=
 (\mathbb{1}_{\mathcal{V}}\otimes V^{-1})\phi _{\otimes, P} (Vf) (\mathbb{1}_{\mathcal{V}}\otimes V),
\end{align} 
since we obtained this operator by an isomorphism from the deformed fields $\phi_{\theta,X}(f)$ to the tensor product space $\mathcal{V}\otimes\mathscr{H}$.
Next, we investigate in which sense our "twisted" fields $ \phi_{\otimes, X} ( f)$ fit into the framework of  twisted deformation.    In this deformation, the point-wise product of two  scalar fields is replaced by the so called twist product. Let us give a precise mathematical definition of the former statement. Let $\mu: \mathscr{S}(\mathbb{R}^d)\otimes\mathscr{S}(\mathbb{R}^d)\rightarrow\mathscr{S}(\mathbb{R}^d)$ denote the point-wise product of Schwartz functions. Then, the twisted product denoted by $\mu_{\theta,P}$ can be defined as $\mu_{\theta,P}= \mu \circ \mathcal{F}_{\theta,P}$ where 
\begin{align}\label{tf} 
\mathcal{F}_{\theta,P} = e^{-i \theta_{\mu\nu}P^{\mu}\otimes P^{\nu}}.
\end{align} 
In a remarkable paper \cite{Zah}, the author gave a rigorous meaning to the twist product of two scalar fields by going to momentum space. Moreover, it was shown that there exists an equivalence between the product of  twist deformed scalar fields and scalar fields introduced on the tensor product space $\mathcal{V}\otimes\mathscr{H}$ given in \cite{DFR}. In particular the formula   was given by 
\begin{align}    \label{tf1} 
\phi_{\otimes, P} ( f_1) \phi_{\otimes, P} ( f_2) = \phi^2_{\otimes, P}(\mu \circ \mathcal{F}_{\theta,P}(f_1\otimes f_2 )),
\end{align} 
where the following notation was introduced, 

\begin{align*}    
\phi^n_{\otimes, P} ( f)=\int \prod_{i=1}^n\, dk_i \, \left( e^{i(k_1+\dots+k_n)\hat{x} }\otimes\hat{f}(k_1,\dots,k_n) \prod_{i=1}^n \check{\phi}(k_i)\right).
\end{align*} 
\begin{remark}
The notation introduced in the context of twist-deformation is written off-shell. The reason therein lies in the extension of the twisted-QFT to scattering. However, in this work we shall proceed by going on-shell, i.e. $\check{\phi}(k)=\delta(k^2-m^2)\tilde{\phi}(k)$. 
\end{remark}
By using the former notations and products of the twisted fields we are able to give the following lemma. 
\begin{lemma}
The product of two twisted fields $\phi_{\otimes, X} ( f_1) \phi_{\otimes, X} ( f_2)$ is given by unitary equivalence to the product of two Moyal-Weyl twisted fields as follows,
\begin{align*}   
\phi_{\otimes, X} ( f_1) \phi_{\otimes, X} ( f_2)= (\mathbb{1}_{\mathcal{V}}\otimes\Gamma( V^{-1}))\phi^2_{\otimes, P}(\mu \circ \mathcal{F}_{\theta,P}(Vf_1\otimes Vf_2 ))
 (\mathbb{1}_{\mathcal{V}}\otimes \Gamma( V ) ).
\end{align*} 
Moreover, the product of $n$-twisted fields $\phi_{\otimes, X} ( f_1),\dots, \phi_{\otimes, X} ( f_n)$ is given by unitary equivalence as
\begin{align*}   
 \phi_{\otimes, X} ( f_1) \cdots\phi_{\otimes, X} ( f_n)=(\mathbb{1}_{\mathcal{V}}\otimes \Gamma( V^{-1}))\phi^n_{\otimes, P}(\mu \circ \mathcal{F}_{\theta,P}(Vf_1\otimes\dots\otimes V f_n ))
 (\mathbb{1}_{\mathcal{V}}\otimes \Gamma( V )  ).
\end{align*} 
\end{lemma}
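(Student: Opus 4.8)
The plan is to reduce the assertion to the already established two--field identity (\ref{tf1}) for the Moyal--Weyl twisted fields, using only the defining relation (\ref{tfx}), i.e. $\phi_{\otimes, X}(f) = (\mathbb{1}_{\mathcal{V}}\otimes\Gamma(V^{-1}))\,\phi_{\otimes, P}(Vf)\,(\mathbb{1}_{\mathcal{V}}\otimes\Gamma(V))$, together with the facts (used throughout Section \ref{s32}) that $V$ maps $\mathscr{S}(\mathbb{R}^d)$ onto $\mathscr{S}(\mathbb{R}^d)$ and that $\Gamma$ is a $*$--homomorphism into the unitaries, so that $\Gamma(V)\Gamma(V^{-1}) = \Gamma(\mathbb{1}) = \mathbb{1}_{\mathscr{H}}$. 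For the first display I would simply insert (\ref{tfx}) for each of the factors $\phi_{\otimes, X}(f_1)$ and $\phi_{\otimes, X}(f_2)$; the inner pair $(\mathbb{1}_{\mathcal{V}}\otimes\Gamma(V))(\mathbb{1}_{\mathcal{V}}\otimes\Gamma(V^{-1}))$ collapses to $\mathbb{1}_{\mathcal{V}\otimes\mathscr{H}}$, leaving $(\mathbb{1}_{\mathcal{V}}\otimes\Gamma(V^{-1}))\,\phi_{\otimes, P}(Vf_1)\phi_{\otimes, P}(Vf_2)\,(\mathbb{1}_{\mathcal{V}}\otimes\Gamma(V))$, and one application of (\ref{tf1}) with $f_i\mapsto Vf_i$ produces exactly $\phi^2_{\otimes, P}(\mu\circ\mathcal{F}_{\theta,P}(Vf_1\otimes Vf_2))$ in the middle.

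For the $n$--fold statement I would argue by induction on $n$: inserting (\ref{tfx}) into every factor makes all $n-1$ internal pairs $(\mathbb{1}_{\mathcal{V}}\otimes\Gamma(V))(\mathbb{1}_{\mathcal{V}}\otimes\Gamma(V^{-1}))$ telescope away, so that what remains is $(\mathbb{1}_{\mathcal{V}}\otimes\Gamma(V^{-1}))\,\phi_{\otimes, P}(Vf_1)\cdots\phi_{\otimes, P}(Vf_n)\,(\mathbb{1}_{\mathcal{V}}\otimes\Gamma(V))$. It then suffices to evaluate the pure Moyal--Weyl product $\phi_{\otimes, P}(Vf_1)\cdots\phi_{\otimes, P}(Vf_n)$: repeatedly applying (\ref{tf1}) and using associativity of the twist product, with the corresponding bookkeeping of $\mathcal{F}_{\theta,P}$ on the $n$--fold tensor product $\mathscr{S}(\mathbb{R}^d)^{\otimes n}$ as in \cite{Zah}, collapses it to $\phi^n_{\otimes, P}(\mu\circ\mathcal{F}_{\theta,P}(Vf_1\otimes\cdots\otimes Vf_n))$. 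Conjugating with $\mathbb{1}_{\mathcal{V}}\otimes\Gamma(V^{-1})$ on the left and $\mathbb{1}_{\mathcal{V}}\otimes\Gamma(V)$ on the right then yields the second display.

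The one genuinely delicate point is that, as in \cite{Zah, GL1}, all of these identities are statements about operator--valued distributions defined through oscillatory integrals on a common dense domain, so the telescoping of the $\Gamma(V)$--factors and the reorderings implicit in iterating (\ref{tf1}) must be justified on a dense set of vectors — conveniently the finite--particle vectors of $\mathcal{V}\otimes\mathscr{H}$, on which $\mathbb{1}_{\mathcal{V}}\otimes\Gamma(V)$ acts unitarily and without changing the particle number, and on which (\ref{tf1}) is already known to hold. I expect this distributional bookkeeping — rather than the algebraic manipulation, which is essentially a telescoping computation — to be the main obstacle; it is handled exactly as the analogous distributional statements in \cite{GL1, Zah}, by testing against vectors in $\mathcal{V}\otimes\mathcal{D}$ and invoking continuity of $f\mapsto\phi_{\otimes, P}(f)\Psi$ together with unitarity of $\Gamma(V)$.
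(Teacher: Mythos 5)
Your proposal is correct and follows essentially the same route as the paper, whose proof is just a one-line appeal to Equation (\ref{tfx}) and the unitarity of $V$: insert the definition of $\phi_{\otimes,X}(f)$ into each factor, let the internal $(\mathbb{1}_{\mathcal{V}}\otimes\Gamma(V))(\mathbb{1}_{\mathcal{V}}\otimes\Gamma(V^{-1}))$ pairs cancel, and apply the Moyal--Weyl product formula (\ref{tf1}) with $f_i\mapsto Vf_i$. Your explicit telescoping for the $n$-fold case and the remarks on the distributional domain merely spell out what the paper leaves implicit.
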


\begin{proof}
The products simply follow by the virtue of Equation (\ref{tfx}) and by the fact that the operator $V$ is unitary. 
\end{proof}
One question still remains unsettled. How far do the twisted fields $\phi_{\otimes, X} ( f )$ correspond to the twisting deformation framework? In particular if the operators generating the twist (\ref{tf}) are unitary equivalent to the momentum operator, do  the fields $\phi_{\otimes, X} ( f )$ represent the correct twisted field according to the deformation chosen? Hence,  we take the unitarily transformed twist operator and calculate the non-commutative space-time. Next we calculate the product of two such twisted fields. Does this product correspond to the formula given in Equation (\ref{tf1})? These questions can be partially answered and are  investigated by looking at   simple examples.\newline

\begin{example}
Let the unitary operator $\Gamma( V )$ be given by the Lorentz-transformation $U(\Lambda)$. Then the twisted field $\phi_{\otimes, X}(f)$ is given by 
\begin{align*} 
 \phi_{\otimes, X} ( f)&=
 (\mathbb{1}_{\mathcal{V}}\otimes V^{-1}) \int  d^n\mu(\mathbf{k})   \left( e^{ik\hat{x} }\otimes {f^-}(\Lambda^{T}\mathbf{k})a(\mathbf{k})+h.c.\right)(\mathbb{1}_{\mathcal{V}}\otimes V)
\\&=\int  d^n\mu(\mathbf{k})   \left( e^{ik\hat{x} }\otimes {f^-}(\Lambda^{T}\mathbf{k})a(\Lambda^{T}\mathbf{k})+h.c.\right) 
\\&=\int  d^n\mu(\mathbf{k})   \left( e^{i k (\Lambda^T\hat{x}) }\otimes {f^-}( \mathbf{k})a( \mathbf{k})+h.c.\right)
, 
\end{align*} 
where in the last lines we used the explicit result of the adjoint action of $U(\Lambda)$ on the particle operators, the representation of the field in $\mathcal{V}\otimes \mathscr{H}$ (see Equation (\ref{aot}))  and the Lorentz-invariance of the measure. To compare this result  with the twisted field obtained by 
deforming with the unitarily transformed twist, we have to rewrite the coordinate operators. It can be easily done in this simple case, since the new coordinate operators are simply $\hat{x}':=\Lambda^{T} \hat{x}$ with the following commutation relations,
\begin{align*}  
 [\hat{x}'_{\mu},\hat{x}'_{\nu}]=-2i\left(\Lambda^{T} \theta\Lambda\right)_{\mu\nu}.
\end{align*}  
These are on the other hand the expected commutator relations of the coordinate operators that correspond to a NC space-time generated by the Lorentz-transformed twist. It  is as well clear that a translation of the deformed field will not be noticed on since the operators $U(x)U(\Lambda)^{-1}PU(\Lambda)U(x)^{-1}$ and $U(\Lambda)^{-1}PU(\Lambda)$ generate the same \textbf{constant} NC space-time.
\end{example}$\,$\\ In the next example we choose to work in the massless case, i.e. with the Lorentz-invariant measure $d^n\mu(\mathbf{p}):= 
d^n\mathbf{p} (2|\mathbf{p}| )^{-1}$. $\,$\\ 
\begin{example}In   the case of massless scalar fields we have an essentially self-adjoint operator $D$ and therefore a strongly continuous one parameter group $\Gamma( V )=e^{ibD}$.  Then the twisted field $\phi_{\otimes, X}(f)$ is given by 
\begin{align*} 
 \phi_{\otimes, X} ( f)&=  \int  d^n\mu(\mathbf{k})   \left( e^{i k (e^{-b}\hat{x}) }\otimes {f^-}( \mathbf{k})a( \mathbf{k})+h.c.\right)
.
\end{align*} As before, we compare the result with the twisted field obtained by  deforming with the unitarily transformed twist. Hence, we rewrite the coordinate operators as before, i.e.    $\hat{x}':=e^{-b} \hat{x}$ with the following commutation relations,
\begin{align*}  
 [\hat{x}'_{\mu},\hat{x}'_{\nu}]=-2ie^{-2b} \theta_{\mu\nu}.
\end{align*}  
These are on the other hand the expected commutator relations of the coordinate operators that correspond to a NC space-time generated by the scale-transformed twist. It  is again clear that a translation of the deformed field will not be noticed   since the operators $\Gamma( V^{-1})\,P\,\Gamma( V )$   generate a  {constant} NC space-time. Since the NC space-time is constant it does not need to be translated with regards to the Poincar\'e transformed fields. 
\end{example}$\,$\\
  However, not all unitary transformations are as simple as the Lorentz-transformations or dilatations and thus it  remains unclear to what extent the field $\phi_{\otimes, X}(f)$ corresponds to the respective twist deformation. This question shall be attacked more viciously in the context of algebraic QFT by the deformation of massless fields.
\section{Conclusion and Outlook} 
In this paper we established the existence of a broad class of deformations that result in wedge-locality and non-trivial two-particle scattering. Moreover, a new light was shed on the Poincar\'e transformational behavior of the deformed fields that correspond to fields living on a NC space-time. In fact it is required to incooperate the transformation into the quantized space-time.\\\\
A connection between the newly deformed fields and the scalar fields obtained by twist deformation was established as well. The connection is given by an explicit isomorphism. However, it is not clear if our fields defined on the tensor product space $\mathcal{V}\otimes\mathscr{H}$ correspond to fields obtained by a twist deformation, other than the constant cases,  i.e. $\Gamma( V )=U(\Lambda)$ or $\Gamma( V )=e^{ibD}$. To prove such an isomorphism we can extend the operator $ \widetilde{V}_{X,\xi}$  to $ (\tilde{U}\otimes \Gamma( V^{-1})){V}_{P,\xi}\Gamma( V )$ and construct an explicit operator $\tilde{U}$. This will have to be studied with specific cases in order to be able to achieve a generalization for  arbitrary $V$. Therefore, one should examine this isomorphism more thoroughly in the case of the massless field. The reason therein lies in the multiplicity of well studied unitary operators in the massless case that leave the wedge covariant. In particular, the conformal group provides a huge class of unitary operators for the investigation of the isomorphism to NC space-times. \\\\ The deformation was achieved with operators that are unitarily equivalent to the momentum operator. This may seem to be a restriction on the operators used for deformation. However, this is not very restrictive since by using the spectral theorem \cite[Chapter 8, Corollary 1.6]{TayII} every set of commuting self-adjoint operators can be represented by the unitary equivalence to the momentum operator. \\\\

  \bibliographystyle{alpha}
\bibliography{allliterature1}

 \end{document}